\newtheorem{theorem}{Theorem}
\newtheorem{lemma}{Lemma}
\newtheorem{corollary}{Corollary}
\newtheorem{remark}{Remark}
\newtheorem{example}{Example}
\newtheorem{problem}{Problem}
\newcommand{\expt}{\bm{E}}
\DeclareMathOperator*{\argmin}{argmin}
\title{ Privacy Preservation by Local Design  
 in Cooperative Networked Control Systems} 
\author{Chao Yang$^\star$, Yuqing Ni$^\dag$, Wen Yang$^\star$, and Hongbo Shi$^\star$
\thanks{$^\star$: Key Laboratory of Smart Manufacturing in Energy Chemical Process, Ministry of Education; Dept. of Automation, East China University of Science and Technology, Shanghai, China, 200237.
Email: \{yangchao, weny, hbshi\}@ecust.edu.cn.
$\dag$: Key Laboratory of Advanced Process Control for Light Industry (Ministry of Education); School of Internet of Things Engineering, Jiangnan University, Wuxi, China.
Email: yuqingni@jiangnan.edu.cn.
The corresponding author is Wen Yang.} 
}
\begin{document}
\maketitle
                  
\begin{abstract} 
In this paper, we study the privacy preservation problem in a cooperative networked control system, which has closed-loop dynamics, working for the task of linear quadratic Guassian (LQG) control. The system consists of a user and a server:
the user owns the plant to control, while the server provides computation capability, and the user employs the server to compute control inputs for it. 
To enable the server's computation, the user needs to provide the measurements of the plant states to the server, who then calculates estimates of the states, based on which the control inputs are computed.
However, the user regards the states as privacy, and makes an interesting request: the user wants the server to have ``incorrect" knowledge of the state estimates rather than the true values. 
Regarding that, we propose a novel design methodology for the privacy preservation, in which the privacy scheme is locally equipped at the user side not open to the server, which manages to create a deviation in the server's knowledge of the state estimates from the true values.
However, this methodology also raises significant challenges: in a closed-loop dynamic system, when the server's seized knowledge is incorrect, the system's behavior becomes complex to analyze; even the stability of the system becomes questionable, as the incorrectness will accumulate through the closed loop as time evolves. 
In this paper, we succeed in showing that the performance loss in LQG control caused by the proposed privacy scheme is bounded by rigorous mathematical proofs, which convinces the availability of the proposed design methodology.
We also propose an associated novel privacy metric and obtain the analytical result on evaluating the privacy performance.
Finally, we study the performance trade-off between privacy and control, where the accordingly proposed optimization problems are solved by numerical methods efficiently.

\end{abstract}

\begin{keywords}
privacy preservation in control systems, cooperative privacy, cooperative networked control systems 
\end{keywords}

\section{Introduction}\label{section:introduction}

The networked control systems, featuring the flexibility of configuring the various components of a closed-loop control system, have gained numerous applications nowadays \cite{TheSurvey}. 
This flexibility could further facilitate the system to be run cooperatively by more than one party: different companies or organizations own respective parts of it.
This could be realized by that various parties unite together, or that the original owner of a system outsources inefficient parts to external parties.
We name this type of systems as \emph{cooperative networked control systems}.
This cooperative manner should be a natural trend of economic demand, since higher economic efficiency is usually achieved by deeper work division and cooperation \cite{Freidman2005}. 
Practically, we observe the exploding cloud service in industry in recent years \cite{AWS, Azure, GoogleCloud, Aliyun}, which may be viewed as a sign of this trend, since the cloud service can be expected to provide one way easily realizing the configurations of cooperative networked control systems. 


A basic scenario of cooperative networked control systems is presented in Fig. \ref{fig:user-server}, which we name as the \emph{user-server system}.
In this scenario, the cooperative networked control system is formed by two parties, the user and  the server.
The user employs the server to provide certain service for it, and to achieve this, it provides necessary information to the server.

\begin{figure}[htbp]
\vspace{-3mm}
\centering
\includegraphics[scale=0.8]{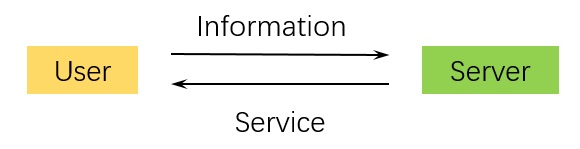}
\vspace{-3mm}
\caption{The user-server system.}
\label{fig:user-server}
\end{figure}
 
Although the cooperation of the user and the server brings efficiency and economics advantages,  it also induces the privacy issue. 
The data provided by the user possibly correspond to certain other information, some of which may be supposed to be private. 
Meanwhile, the server is motivated to analyze the information data purposely, since more knowledge about clients usually benefits more in business. 
Hence, the user may face unwilled exposure of privacy. 
Take a daily-life example as an analogy. When people shop on-line, the data of items bought and money paid (maybe living address also included) are recorded.
These data can be related to the customer's income level, family structure, hobbies, and so on. 
The business provider could use the customer's historical shopping records to depict his/her preference profile, so that better business decisions can be made (such as a personalized recommendation service). 
However, for certain customers, they may not want their preferences exposed. 
Nevertheless, the risk seems inevitable. 
We have to say that the server does not literally betray the cooperation; it is referred to as ``honest but curious" \cite{Goldreich2009}, i.e., it still honestly serves the duty in cooperation, only meanwhile intending to discover more knowledge of the user as well.

Consequently, to safely use the cooperative networked control systems, we need to address the privacy issue. 
One common methodology is that the server provides certain privacy policy in favor of the user. 
However, the preservation for privacy in this situation largely depends on the server's integrity.
Hence, a careful user may still have this demand: regardless of the server's privacy policies, it demands a \emph{local} privacy scheme whose existence is unknown to the server, so that the server's achievable knowledge of the user's private information is ``incorrect". 
According to the current literature, this methodology for privacy preservation in the direction of cooperative networked control systems has never been visited yet.
In this paper, we study this methodology.

\subsection{Related Studies}


When specifically considering the privacy issue in networked control systems, one may first relate it to the study on the security of cyber-physical systems (CPSs) \cite{Mo2012IEEE, Krishnamurthy2018TIFS, Agarwal2018CDC}, since both of them corresponds to system safety. 
However, the study on privacy issue cannot be entirely covered by the methodologies in CPS security and has its own particularities due to their different natures.
1) In the CPS security problem, an adversary is excluded in the original healthy system, usually unknown. In the privacy problem, the one to defense is the known cooperator who also belongs to the system. 
2) In the CPS security problem, the information collection (usually by stealthy) is illegal and may be followed by system damaging, which is unwanted. In the privacy problem, information is collected with right authorized, and the cooperator normally serves the duty in cooperation, which is needed.

The existing studies related to privacy in cooperative networked control systems have several frameworks, each of which is based on the understanding and modeling of the concept of privacy.
The first important one is called \emph{differential privacy}. 
This concept was originally proposed in the field of statistical databases \cite{Dwork2006ICALP, Dwork2008survey}, where it pointed out that one database item has the risk of being identified by analyzing the differences between certain answering outputs of database queries, which is later applied to vairous practical applications \cite{Ye2022TIFS}, especially in the studies on auction mechanism \cite{McSherry2007FOCS, Li2020TIFS}.
To protect the privacy, it proposed the method of adding randomness to query answers.
This concept and methodology were later borrowed by the control field  \cite{LeNy2014TAC}. 
They were mainly employed in the model who has a \emph{centralizing} or \emph{fusing} mechanism with \emph{multiple} participating parties, which are similar to the model of a statistical database who assembles data from multiple participants. 
%
In this field, differential privacy was newly considered in various previous problems and models, including consensus in centralized and distributed system \cite{Huang2012WPES}, distributed estimation \cite{He2018TIT}, centralized Kalman filtering \cite{LeNy2014TAC}, linear dynamic system  \cite{Koufogiannis2017CDC}, MIMO system \cite{LeNy2018TAC}, point distribution estimation \cite{Ye2018TIT}, constrained optimization for distributed users \cite{Han2017TAC, Hale2018TCNS}, cloud-based control \cite{Hale2018ACC}.

The second important framework of privacy preservation should be the \emph{information-theoretic} approach \cite{Nekouei2019Survey}.
In this framework, information entropy is used as a metric of privacy. As the indicator of how much information is carried, information entropy is a natural metric to measure the privacy.
Compared with differential privacy, who is only available for centralized models with multiple participants, information-theoretic approach also works for models with one single participant. 
Moreover, in differential privacy, the preservation is usually realized by injecting Gaussian or Laplacian noise, while in the information-theoretic approach, more sophisticated schemes are  also proposed.
Nekouei et al. \cite{Nekouei2018ACC} presented two privatized information-sharing schemes in a multi-sensor estimation system.
Li et al. \cite{Li2018TIT} proposed the privacy-aware policy for resource distribution by solving an optimization problem.
Nekouei et al. \cite{Nekouei2018a} designed a novel estimator which guarantees the local parameter's privacy.
Jia et al. \cite{Jia2017ICCPS} studied a closed-loop predictive control system.
Tanaka et al. \cite{Tanaka2017ACC} considered a cloud-based LQG control system.

Another framework worth attention is the \emph{homomorphic encryption}. In the previous two frameworks, privacy is preserved by adding noise. Homomorphic encryption provides an entirely different methodology.
It finds proper encryption operators or algorithms, which guarantees the homomorphic nature between the data before and after the encryption operation.
Farokhi et al. \cite{Farokhi2016NECSYS, Farokhi2020} and Tran et al. \cite{Tran2020CEP} considered such problems.
Ni et al. \cite{Ni2021TIFS} proposed a novel privacy scheme for distributed estimation based on the technique of homomorphic encryption.

Moreover,  several significant studies are not included in the previous frameworks, such as \cite{Mo2014CDC, Mo2017TAC}, in which novel schemes of privacy preservation for initial values in consensus systems were proposed.


\subsection{Observation on the Related Studies}

After reviewing the literature, one can find a common issue in the methodologies for privacy preservation presented in existing studies: in the cooperation, the server (or the party equivalent to the server) usually has either full or partial knowledge about the privacy scheme used.
In most studies on differential privacy, the privacy scheme is designed by default at the server side, rather than at the user side 
\cite{Han2017TAC, LeNy2018TAC}. 
For the cases the user locally equips a privacy scheme, the server also knows the existence of the user's privacy scheme and is usually provided the associated parameters as well \cite{Tanaka2017ACC, Hale2018ACC, Yazdani2023TAC}. In both cases, the server is able to seize the correct, or not accurate but correct, information about the user's privacy.
Then the user's privacy's safety largely depends on the server's integrity, which shows the weakness of existing methodologies.
%
Moreover, the privacy schemes designed by homomorphic encryption, which are nevertheless entirely on behalf of the user, have strong limitations in required assumptions and cannot be easily extended to general systems.
%

Secondly, most existing studies investigated open-loop systems without feedback, as the analysis becomes difficult when considering a privacy scheme added in a closed-loop dynamic system. 
Only a few studies are  on the closed-loop dynamic system \cite{Jia2017ICCPS, Hale2018ACC, Yazdani2023TAC, Tanaka2017ACC, Alexandru2019ICCPS}.


~~


\subsection{The Study of This Paper}

In this paper, we specifically consider the privacy preservation problem in a user-server system who is a cooperative closed-loop dynamic control system, with the goal of the optimal LQG control.
In the system, the plant is owned by the user, while the computation function belongs to the server, and the user employs the server to compute the optimal LQG control inputs for it.
To enable the server's computation, the user has to provide measurements of the plant states to the server. 
Given the measurements, the server is able to know the state estimates and then computes the control inputs based on them.

However, in this scenario, the user takes its state trajactory as private information.
In this paper, the user raises the following request: it demands a localized privacy scheme which is unknown to the server, so that the server could only obtain ``incorrect" estimates of the states. 
We consider the design and analysis of this type of privacy preservation.


The novelty and main contributions of this paper are summarized as follows.
\begin{enumerate}
\item
In this paper, we propose a novel methodology for privacy preservation, where the privacy scheme is employed locally at the user side, which makes the server's knowledge of the privacy information ``incorrect".
Compared with previous methodologies, this paper avoids the situation that the server has knowledge of the employment of the privacy schemes.


\item
Meanwhile, this local design of privacy preservation is based on the environment of the closed-loop systems, which raises additional challenges. 
As the closed-loop dynamics is considered, the server's incorrect knowledge will make the system's behavior complex to analyze. 
Moreover, as the incorrectness will accumulate through the closed loop as time evolves, it may cause \emph{instability} of the control dynamics.
In this paper, we successfully show by rigorous mathematical proofs that the performance loss in LQG control caused by the proposed privacy scheme is bounded, which convinces the availability of the proposed design methodology.


\item
We also propose a novel privacy metric associated with the \emph{state estimates}, which measures the deviation between the obtained estimates by the server and the true value. Compared with the metrics in differential privacy (the $\epsilon$-differential privacy) and informtion-theoretic approaches (information-entropy-based ones), our metric is compatible with the conventional notion in control systems and hence provides intuitive understanding of the level of privacy preservation, as in control studies, the estimates are the typical quantities considered.


\item
Moreover, we make a primary attempt to find the localized privacy schemes \emph{without any loss in control performance}, which should be the perfect scenario for the user.

\end{enumerate}

The remainder of the paper is organized as follows. Section II proposes the study framework of the local design for privacy preservation in the considered system. Section III presents the main results. Section IV considers the privacy schemes which achieves privacy preservation while has no loss in service performance. Section V presents simulation examples. At last, Section VI discusses the possible future work directions and Section VII makes the conclusion of this study.

~\\
\textit{Notations}: 
$\mathbb{Z}_{+}$ is the set of non-negative integers and $k\in \mathbb{Z}_+$ is the time index. 
$\mathbb{R}$ is the set of real numbers. 
$\mathbb{R}^{n}$ is the $n$-dimensional Euclidean space. 
$\mathbb{S}_{+}^{n}$ (and $\mathbb{S}_{++}^{n}$) is the set of $n$ by $n$ positive semi-definite matrices (and positive definite matrices); when $X \in \mathbb{S}_{+}^{n}$ (and $\mathbb{S}_{++}^{n}$), it is written as $X \geq 0$ (and $X > 0$). 
$X \geq Y$ if $X - Y \in  \mathbb{S}_{+}^{n}$. 
$\bm{E}(\cdot)$ or $\bm{E}[\cdot]$ 
is the expectation of a random variable and $\bm{E}(\cdot|\cdot)$ or $\bm{E}[\cdot|\cdot]$ is the conditional expectation. 
$\mathrm{tr}(\cdot)$ is the trace of a matrix. 
For a matrix $X$, $\mathscr{N}(X)$ is the kernel space of $X$.

\section{Framework of the Privacy Preservation by Local Design}



In this section, we propose the framework of privacy preservation by local design in a user-server system for the task of cooperative LQG control. 
In the remainder of this section, we present the setup of the ordinary cooperative system, the methodology of the proposed localized privacy preservation, and feasible problems to study.

\subsection{Ordinary Cooperative System}

The ordinary user-server system for cooperative LQG control is illustrated in Fig. \ref{fig:LQG2original}.
The user owns a plant to control, the server owns computation capability, and the user employs the server to 
compute the optimal LQG control inputs for it.

\begin{figure}[htbp]
\vspace{-1mm}
\centering
\includegraphics[scale=0.4]{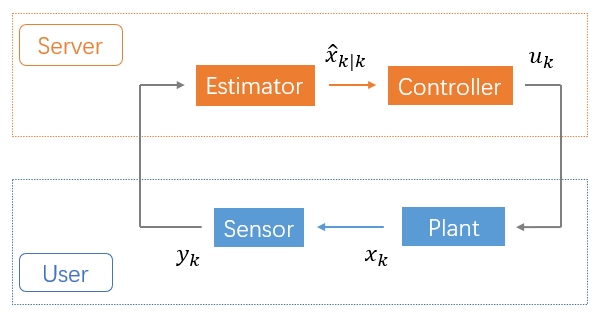}
\vspace{-3mm}
\caption{The ordinary cooperative LQG control system.}
\label{fig:LQG2original}
\end{figure}

The user has the following dynamic process to control:
\begin{eqnarray}\label{eqn:state-process}
x_{k+1}&=&Ax_k+Bu_k+w_k, ~~~
\end{eqnarray}
where $x_k\in \mathbb{R}^n$ is the state of the process, $u_k\in \mathbb{R}^l$ is the control input, and $w_k$ is the Gaussian white noise with distribution $\mathcal{N}(0,Q) (Q\geq 0)$. 
The initial condition of the state is assumed as that $x_0$ is Gaussian with $\mathcal{N}(\bar{x}_0,\Sigma_0) (\Sigma_0\geq 0)$.
The time horizon of the process is assumed to be a finite $T$. 
$(A, B)$ is assumed to be controllable.

The user has no direct access to the value of the state $x_k$, and instead, it owns a sensor measuring the state as follows: 
\begin{eqnarray}\label{eqn:measurement-process}
y_k&=&Cx_k+v_k,
\end{eqnarray}
where $y_k\in \mathbb{R}^m$ is the measurement of $x_k$ taken by the sensor and $v_k$ is the white Gaussian noise corrupting the measurement with distribution $\mathcal{N}(0,R)$ ($R>0$). 
The noise $\{w_k\}$, the noise $\{v_k\}$, and the initial state $x_0$ are mutually independent of each other.
We also assume that $(A, \sqrt{Q})$ is stabilizable and $(C, A)$ is detectable.

The user demands the optimal LQG control for its state process. 
The considered finite-time quadratic objective for the LQG control is denoted as $\mathcal{O}_{0:T}$ and is defined as
\begin{eqnarray} \label{eqn:obj2}
\mathcal{O}_{0:T}\triangleq \bm{E}\Big[\sum_{k=0}^{T-1}(x_k'Wx_k+u_k'Uu_k)+x_T'Wx_T\Big],
\end{eqnarray}
where $W$ and $U$ are weight matrices satisfying $W\geq 0$ and $U>0$,
and the expectation is taken with respect to all possible randomness. We also assume that $(\sqrt{W}, A)$ is detectable.

In this cooperative system, the server computes the optimal LQG control inputs for the user.
Since the process states are unavailable, the server has to also estimate the states before computing the control inputs.
That is to say, the server serves as the estimator and the controller in the control loop (Fig. \ref{fig:LQG2original}).
%
%

%
%
%

To enable the server's computation, before the process, the user shares the following system parameters with the server:
\begin{itemize}
\item
$A, B, W, U$ (for LQG control); 
\item
$C, Q, R$ (for state estimation); 
\item
time horizon $T$;
\item
initial condition $\mathcal{N}(\bar{x}_0,\Sigma_0)$. 
\end{itemize}
When the process begins, the user is also supposed to provide $y_k$ to the server at each time $k$.

This is the classic problem of LQG control where only the measurements of states are available. 
The server works as follows. 
Firstly, before the process, it does the following computation in favor of the optimal LQG control:
\begin{eqnarray} 
S_T\!\!\!\!&=&\!\!\!\! W, \label{eqn:LQG-S-terminal}\\
\nonumber S_k\!\!\!\!&=&\!\!\!\!A'S_{k+1}A+W\\
 &&\!\!\!-A'S_{k+1}B(B'S_{k+1}B+U)^{-1}B'S_{k+1}A, \label{eqn:LQG-S-recursion}\\ 
L_k\!\!\!\!&=&\!\!\!\!-(B'S_{k+1}B+U)^{-1}B'S_{k+1}A. \label{eqn:LQG-gain}
\end{eqnarray}

Secondly, it estimates the user's state $x_k$, where the minimum mean-squared error (MMSE) estimates are calculated. 
When the process begins, at each time $k$, the user sends $y_k$ to the server. Let $\bm{Y}_k\triangleq \{y_1,y_2,...,y_k\}$. Then the server computes the \textit{a priori} and \textit{a posteriori} estimates $\hat{x}_{k|k-1}$ and $\hat{x}_{k|k}$ defined as follows:
\begin{eqnarray*}
\hat{x}_{k|k-1} &\triangleq& \bm{E}[x_k|\bm{Y}_{k-1}],\\
\hat{x}_{k|k}   &\triangleq& \bm{E}[x_k|\bm{Y}_{k}].
\end{eqnarray*}
Meanwhile, let $P_{k|k-1}$ and $P_{k|k}$ be the estimation error covariance matrices associated with $\hat{x}_{k|k-1}$ and $\hat{x}_{k|k}$, respectively:
\begin{eqnarray*}
P_{k|k-1} &\triangleq& \bm{E}[(x_k-\hat{x}_{k|k-1})(x_k-\hat{x}_{k|k-1})'|\bm{Y}_{k-1}],\\
P_{k|k}   &\triangleq& \bm{E}[(x_k-\hat{x}_{k|k})(x_k-\hat{x}_{k|k})'|\bm{Y}_{k}].
\end{eqnarray*}
The calculation of the state estimation is standard: by Kalman filtering. 
Given the initial value
$\hat{x}_{0|0}=\bar{x}_0$ and $P_{0|0}=\Sigma_0$, for $k\geq 1$, the server does the following computation:
\begin{eqnarray}
\hat{x}_{k|k-1} & = & A\hat{x}_{k-1|k-1}+Bu_{k-1}, \label{eqn:state-time-update} \\
P_{k|k-1} & = & AP_{k-1|k-1}A' + Q, \label{eqn:error-covariance-time-update} \\
K_{k} & = & P_{k|k-1}C'(CP_{k|k-1}C' + R)^{-1}, \label{eqn:kalman-gain} \\
\hat{x}_{k|k} & = & \hat{x}_{k|k-1} + K_{k}(y_{k} - C\hat{x}_{k|k-1}), \label{eqn:state-measurement-update} \\
P_{k|k} & = &(I -K_{k}C)P_{k|k-1}. \label{eqn:error-covariance-measurement-update}
\end{eqnarray}

Thirdly, the server computes the optimal control input:
\begin{eqnarray*}
u^\star_k=L_k \hat{x}_{k|k},
\end{eqnarray*}
and returns $u^\star_k$ to the user's actuator.

Then we evaluate the performance of the optimal LQG control.
The optimal objective value, denoted as $\mathcal{O}_{0:T}^\star$, is given as follows \cite{Betsekas1995}:
\begin{eqnarray} \label{eqn:LQG-obj-opt}
\mathcal{O}_{0:T}^\star=\bm{E}(x_0'S_0x_0)+\sum_{k=0}^{T-1}\mathrm{tr}(S_{k+1}Q)+\sum_{k=0}^{T-1}\mathrm{tr}(\mathit\Phi_{k}P_k),
\end{eqnarray}
where 
\begin{eqnarray*}
\mathit\Phi_k=A'S_{k+1}B(B'S_{k+1}B+U)^{-1}B'S_{k+1}A.
\end{eqnarray*}

\subsection{Local Design of Privacy Preservation} 

\begin{figure}[htbp]
\vspace{-3mm}
\centering
\includegraphics[scale=0.35]{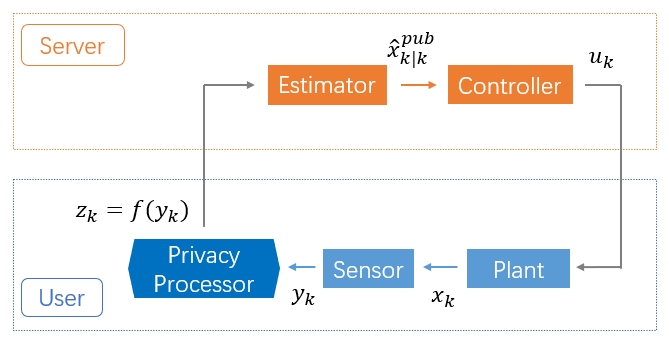}
\vspace{-3mm}
\caption{A local privacy processor is used.}
\label{fig:LQG2private}
\end{figure}

In the ordinary cooperative system, the estimates of the user's states trajectory $\{x_k\}$ is transparent to the server. 
However, the user takes its states trajectory as privacy.
It wants to conceal the true values of state estimates and let the server have \emph{incorrect} state estimates, which are deviated from the true ones.
Hence, it decides to employ a local privacy scheme, not open to the server, to achieve the desired privacy preservation.

In this paper, as a start, we consider the simplest design of the privacy scheme for the user: just adding a module of privacy processor to process the data to be sent to the server (Fig. \ref{fig:LQG2private}).
At each time $k$,  the user generates a signal $z_k$, and we let it be a function of $y_k$:
\begin{eqnarray*}
z_k=f(y_k),
\end{eqnarray*}
and sends it instead of $y_k$ to the server.
The function $f(\cdot)$ could generally depend on $\bm{Y}_k$ and $u_0,u_1,...,u_{k-1}$:
\begin{eqnarray*}
z_k=f(\bm{Y}_k,u_0,u_1,...,u_{k-1}).
\end{eqnarray*}

Specifically, in this paper, we consider one basic scheme: noise injection. We add to $y_k$ a noise signal $\delta_k$, say,
\begin{eqnarray}\label{eqn:LQG2-scheme}
z_k=y_k+\delta_k.
\end{eqnarray}
The noise $\delta_k$ is assumed to be i.i.d. and to have the Gaussian distribution $\mathcal{N}(0, \Sigma_\delta) ( \Sigma_\delta\geq 0)$.

\begin{remark} \itshape
In existing related studies \cite{Tanaka2017ACC, Hale2018ACC, Yazdani2023TAC}, the privacy scheme is also located at the user side, but the existence of the privacy scheme is known by the server and the parameters are also shared with the server.
Although the server cannot have access to the accurate state values because of the privacy scheme, it is still able to obtain correct estimation based on the shared knowledge of the privacy scheme.


\end{remark}

\subsection{Privacy Metric} 

Since the server is unaware that the user is using a local privacy scheme, it will take $z_k$ as the user's measurement $y_k$.
Then the signal $z_k$ will create deviations in the server's state estimates, as the user demands.

To analyze the quality of privacy preservation in this scenario, we clarify the following notations.
\begin{itemize}
\item
The server's estimate of $x_k$ and the associated error covariance are denoted as $\hat{x}^{svr}_{k|k}$ and $P^{svr}_{k|k}$.
This estimate is not true but deviates from the true $\hat{x}_{k|k}$.

\item
The estimate of $x_k$ based on the knowledge of $y_k$, i.e., the true estimate, and the associated error covariance, are still denoted by $\hat{x}_{k|k}$ and $P_{k|k}$.


\end{itemize}
Since the server is planned to compute the state estimates and the associated error covariance by eqn. (\ref{eqn:state-time-update})-(\ref{eqn:error-covariance-measurement-update}),  the server's estimate and error covariance evolve accordingly:
\begin{eqnarray}
\hat{x}^{svr}_{k|k-1} \!\!\!&=&\!\!\! A\hat{x}^{svr}_{k-1|k-1}+Bu_{k-1},\label{eqn:KF-pub-1} \\
P^{svr}_{k|k-1} \!\!\!&=&\!\!\! AP^{svr}_{k-1|k-1}A' + Q,  \\
K^{svr}_{k} \!\!\!&=&\!\!\! P^{svr}_{k|k-1}C'(CP^{svr}_{k|k-1}C' + R)^{-1},  \\
\hat{x}^{svr}_{k|k} \!\!\!&=&\!\!\! \hat{x}^{svr}_{k|k-1} + K^{svr}_{k}(z_{k} - C\hat{x}^{svr}_{k|k-1}),  \\
P^{svr}_{k|k} \!\!\!&=&\!\!\! (I -K^{svr}_{k}C)P^{svr}_{k|k-1},  \label{eqn:KF-pub-5}
\end{eqnarray}
where $\hat{x}^{svr}_{k|k-1}$, $P^{svr}_{k|k-1}$, and $K^{svr}_{k}$ are the server's version of corresponding variables.

Then we propose the privacy metric to measure the quality of privacy preservation at time $k$, the deviation between $\hat{x}^{svr}_{k|k}$ and $\hat{x}_{k|k}$, as
\begin{eqnarray}
\mathcal{Q}_{privacy}^k  \!\!\!&\triangleq&\!\!\! \bm{E}\Big[(\hat{x}^{svr}_{k|k}-\hat{x}_{k|k})(\hat{x}^{svr}_{k|k}-\hat{x}_{k|k})'|\bm{Y}_k\Big], 
\end{eqnarray}
and the average quality of privacy preservation over the entire process as
\begin{eqnarray}\label{eqn:Q_privacy}
\mathcal{Q}_{privacy}\triangleq {1\over T}\sum_{k=0}^{T-1}\mathcal{Q}_{privacy}^k.
\end{eqnarray}
They indicate the deviation of the knowledge of the estimates gained by the server from the true ones.

One may notice that the proposed metric has a quadratic form similar to the error covariance associated with an estimate. Essentially, the privacy problem considered in this paper can be seen as an \emph{``anti-estimation"} problem. 

\begin{remark} \itshape
Compared with the existing privacy metrics, the proposed one has its advantages. 
It is able to apply to a system with only one single user, while differential privacy can be only used in centralized systems with multiple users.
Compared with information-entropy based metrics, 
the proposed metric works for the state estimate, and is more compatible with the common notions in the study of control systems.

\end{remark}

\subsection{Problems to Study}

Although the scheme of noise injection is frequently seen in the studies of privacy preservation, in our considered model, the methodology of local design of the privacy scheme in the background of the  system's closed-loop dynamics makes the analysis seem to be extremely difficult.
By the local design, the injected noise creates a deviation in the server's knowledge, and will further create a deviation in the control input from the original one. The effect of the input deviation will accumulate through the closed-loop as time evolves (presented in Subsection \ref{sec:LQG2-control-performance}). 
This feature enhances the difficulty of analysis.
Moreover, one cannot even affirm the stability of the control system affected by the accumulation of deviations in control inputs.

In this paper, we consider to work on the following problems. 

\begin{enumerate}
\item
The analysis of the quality of privacy preservation $\mathcal{Q}_{privacy}$.

\item
The performance loss in the LQG control as the sacrifice for the privacy preservation.

\item
The trade-off between the privacy preservation and LQG control performances.

\end{enumerate}

\section{Performance Analysis}

In this section, we present the main results of this paper. We first analyze the performance of privacy preservation of the proposed scheme, and then analyze the performance loss in the LQG control as the sacrifice of service quality. Finally, we propose an optimization problem to study the trade-off between the privacy preservation and the LQG control.

\subsection{Privacy Performance}

For the performance of privacy preservation by the proposed privacy scheme, we have the following result.

\begin{theorem} \label{thm:LQG2-PrivacyPerformance}
It holds that
\begin{eqnarray}
\hat{x}^{svr}_{k|k}\!-\!\hat{x}_{k|k}=(I\!-\!K_k C)A(\hat{x}^{svr}_{k-1|k-1}\!-\!\hat{x}_{k-1|k-1}) \!+\!K_k\delta_k
\end{eqnarray}
and
\begin{eqnarray} \label{eqn:Q-privacy-recursion}
\mathcal{Q}_{privacy}^k=(I\!-\!K_k C)A\mathcal{Q}_{privacy}^{k-1}A'(I\!-\!K_k C)' \!+\!K_k\Sigma_\delta K_k',
\end{eqnarray}
where $K_k$ is the Kalman gain defined in eqn. (\ref{eqn:kalman-gain}). The recursion starts with
$\hat{x}^{svr}_{0|0}-\hat{x}_{0|0}=0$ and $\mathcal{Q}_{privacy}^0=0$.
\end{theorem}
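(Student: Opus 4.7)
The plan is to derive the two displayed equations in sequence, with the pivotal observation being that the covariance and Kalman-gain sequences coincide between the true and public filters. The recursion (\ref{eqn:error-covariance-time-update})--(\ref{eqn:error-covariance-measurement-update}) for $P_{k|k}$ depends only on $A,\,C,\,Q,\,R$ and the initial $\Sigma_0$, all of which were shared with the server at setup; since the public recursion (\ref{eqn:KF-pub-1})--(\ref{eqn:KF-pub-5}) has identical form and is initialized at $P^{pub}_{0|0}=\Sigma_0$, it produces identical iterates, so $P^{pub}_{k|k}=P_{k|k}$ and $K^{pub}_k=K_k$ for every $k$. Thus both filters run with the same gains, and the discrepancy between their outputs is driven purely by the discrepancy between $z_k$ and $y_k$.

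For the first displayed equation, I would subtract the true Kalman update from the public one stage by stage. The time update contributes $\hat{x}^{pub}_{k|k-1}-\hat{x}_{k|k-1}=A(\hat{x}^{pub}_{k-1|k-1}-\hat{x}_{k-1|k-1})$, because the single control $u_{k-1}$ produced by the server (and physically applied at the user's actuator) appears identically in both expressions. The measurement update then contributes $K_k(z_k-C\hat{x}^{pub}_{k|k-1})-K_k(y_k-C\hat{x}_{k|k-1})$; substituting $z_k=y_k+\delta_k$ and combining with the time-update difference collapses the whole thing to $(I-K_kC)A(\hat{x}^{pub}_{k-1|k-1}-\hat{x}_{k-1|k-1})+K_k\delta_k$. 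The base case is immediate from the shared initial estimate $\bar{x}_0$.

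For the covariance recursion, I would set $e_k:=\hat{x}^{pub}_{k|k}-\hat{x}_{k|k}$, form the outer product $e_ke_k'$ using the first recursion, and take $\bm{E}[\cdot\mid\bm{Y}_k]$. The deterministic prefactors $(I-K_kC)A$ and $K_k$ pull outside. The two cross terms involve $\bm{E}[e_{k-1}\,\delta_k'\mid\bm{Y}_k]$ and its transpose; unrolling from $e_0=0$ writes $e_{k-1}$ as a fixed linear combination of $\delta_{1:k-1}$ with deterministic coefficient matrices, and $\delta_k$ is i.i.d.\ zero-mean Gaussian, independent of $(\bm{Y}_k,\delta_{1:k-1})$ because none of the entries of $\bm{Y}_k$ nor of the controls $u_{0:k-1}$ can depend on $\delta_k$ (they all precede its injection). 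Hence the cross terms vanish and the $\delta_k\delta_k'$ term contributes $K_k\Sigma_\delta K_k'$, leaving the previous term to be identified with $\mathcal{Q}_{privacy}^{k-1}$.

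The subtlety I expect to have to handle with care is precisely that $\bm{Y}_k$ does carry information about the past injections $\delta_{1:k-1}$ via the closed-loop controls, so one cannot simply substitute $\bm{E}[e_{k-1}e_{k-1}'\mid\bm{Y}_k]=\bm{E}[e_{k-1}e_{k-1}'\mid\bm{Y}_{k-1}]$ on syntactic grounds. The cleanest route is to exploit that $e_{k-1}=\sum_{j=1}^{k-1}M_j^{(k-1)}\delta_j$ with deterministic matrices $M_j^{(k-1)}$, and then peel off the $\delta_j$'s in the order of their generation using the tower property, each time using that $\delta_j$ is independent of the sigma-field generated strictly before its injection and contributes a deterministic $\Sigma_\delta$ to the expected outer product. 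This shows $\mathcal{Q}_{privacy}^{k-1}$ is in fact a deterministic matrix satisfying the announced recursion irrespective of the conditioning event, closing the induction.
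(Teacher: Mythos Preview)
Your proposal is correct and follows essentially the same route as the paper's proof: both first observe that the public and true filters share the same covariance and gain sequences (so $K^{pub}_k=K_k$), then subtract the two Kalman recursions to obtain the recursion for $\hat{x}^{pub}_{k|k}-\hat{x}_{k|k}$, and read off the covariance recursion from it. You are in fact more careful than the paper about the second-moment step; the paper simply writes down the covariance identity without discussing the cross terms or the conditioning subtlety you flag.
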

\begin{proof}
In appendix.
\end{proof}
~

We can see that the privacy quality $\mathcal{Q}_{privacy}^k$ is determined by $\Sigma_\delta$. Moreover, we find that a lower bound of $\mathcal{Q}_{privacy}^k$ exists.


\begin{theorem} \label{thm:Q-privacy-bound}
A matrix sequence $\{M_k\}$, $M_k\in \mathbb{S}_{+}^{n}$,  satisfies that $M_0=0$, $M_1=\mathcal{Q}_{privacy}^1$, and when $k\geq 2$,
\begin{eqnarray}
M^-_k\!\!\!&=&\!\!\! AM_{k-1}A',  \\
M_k\!\!\!&=&\!\!\! M^-_k-M^-_kC'(CM^-_kC'+\Sigma_\delta)^{-1}CM^-_k.
\end{eqnarray}
Then
\begin{eqnarray*}
\mathcal{Q}_{privacy}^k\geq M_k,~~~\forall k.
\end{eqnarray*}

\end{theorem}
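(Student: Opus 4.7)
The plan is to establish $\mathcal{Q}_{privacy}^k \geq M_k$ by induction on $k$, by recognizing that the recursion of Theorem~\ref{thm:LQG2-PrivacyPerformance} for $\mathcal{Q}_{privacy}^k$ is precisely the (generally \emph{suboptimal}) Joseph-form Kalman covariance update, with gain $K_k$, for the virtual linear-Gaussian system $\xi_{k+1}=A\xi_k$, $\eta_k=C\xi_k+\delta_k$ (no process noise, measurement-noise covariance $\Sigma_\delta$), whereas the recursion defining $M_k$ is the \emph{optimal} Kalman update for this same virtual system. The base case is immediate from the statement: $M_0=0=\mathcal{Q}_{privacy}^0$, and $M_1=\mathcal{Q}_{privacy}^1$ by definition.

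For the inductive step, I would introduce the auxiliary matrix-valued functions
\begin{equation*}
\phi(G,P):=(I-GC)APA'(I-GC)'+G\Sigma_\delta G', \qquad \psi(P):=APA'-APA'C'(CAPA'C'+\Sigma_\delta)^{-1}CAPA',
\end{equation*}
of a gain $G$ and a PSD matrix $P$. The standard ``completing the square'' identity underlying the Kalman-gain derivation (valid since $\Sigma_\delta>0$ makes the inverse well defined) reads
\begin{equation*}
\phi(G,P)-\psi(P)=(G-G^\star(P))(CAPA'C'+\Sigma_\delta)(G-G^\star(P))'\geq 0,
\end{equation*}
where $G^\star(P):=APA'C'(CAPA'C'+\Sigma_\delta)^{-1}$. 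Applied at $P=\mathcal{Q}_{privacy}^{k-1}$ with $G=K_k$, together with Theorem~\ref{thm:LQG2-PrivacyPerformance}, this gives $\mathcal{Q}_{privacy}^k=\phi(K_k,\mathcal{Q}_{privacy}^{k-1})\geq\psi(\mathcal{Q}_{privacy}^{k-1})$, and by construction $M_k=\psi(M_{k-1})$.

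The step I expect to require the most care, and which I view as the main obstacle, is Loewner-monotonicity of $\psi$: $P_1\geq P_2\geq 0\Rightarrow\psi(P_1)\geq\psi(P_2)$. Rather than wrestling with the matrix-inversion lemma and possible rank deficiencies of $APA'$, I would reuse the variational identity above. For each fixed gain $G$, the map $P\mapsto\phi(G,P)$ is manifestly monotone, since $\phi(G,P_1)-\phi(G,P_2)=(I-GC)A(P_1-P_2)A'(I-GC)'\geq 0$. Choosing $G=G^\star(P_1)$ then yields
\begin{equation*}
\psi(P_2)\leq\phi(G^\star(P_1),P_2)\leq\phi(G^\star(P_1),P_1)=\psi(P_1).
\end{equation*}
Chaining $\mathcal{Q}_{privacy}^k\geq\psi(\mathcal{Q}_{privacy}^{k-1})$, the inductive hypothesis $\mathcal{Q}_{privacy}^{k-1}\geq M_{k-1}$, and monotonicity $\psi(\mathcal{Q}_{privacy}^{k-1})\geq\psi(M_{k-1})=M_k$ closes the induction and proves the claim.
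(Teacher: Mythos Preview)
Your proposal is correct and follows essentially the same route as the paper: induction using the Joseph-form operator, monotonicity of that operator in its covariance argument, and the fact that the Kalman-type gain minimizes it. The only cosmetic difference is the order in which you apply these two facts---the paper goes directly via the chain $\phi(A\mathcal{Q}_{privacy}^{k-1}A',K_k)\geq\phi(AM_{k-1}A',K_k)\geq\phi(AM_{k-1}A',\Gamma^M_k)=M_k$, whereas you first drop to $\psi(\mathcal{Q}_{privacy}^{k-1})$ and then invoke monotonicity of $\psi$---but the content is the same.
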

\begin{proof}
Firstly, we have $\mathcal{Q}_{privacy}^0=M_0$ and $\mathcal{Q}_{privacy}^1=M_1$.  Since $M_1\neq 0$, the sequence $\{M_k\}$ will not be all 0. Assume that $\mathcal{Q}_{privacy}^{k-1}\geq M_{k-1}$ holds.
Consider the operator
\begin{eqnarray*}
\phi(X,\mathit\Gamma) \triangleq (I-\mathit\Gamma C)X(I-\mathit\Gamma C)'+\mathit\Gamma \Sigma_\delta \mathit\Gamma'.
\end{eqnarray*}
Then $\mathcal{Q}_{privacy}^k=\phi(A\mathcal{Q}_{privacy}^{k-1}A',K_k)$. Let 
\begin{eqnarray*}
\mathit\Gamma^M_k\triangleq M^-_kC'(CM^-_kC'+\Sigma_\delta)^{-1},
\end{eqnarray*}
and we have $M_k=\phi(M^-_{k-1},\mathit\Gamma^M_k)$. Furthermore, it is straightforward to verify that 
\begin{eqnarray*}
\mathit\Gamma^M_k=\argmin_{\mathit\Gamma} \phi(M^-_{k-1},\mathit\Gamma).
\end{eqnarray*}
Since $\mathcal{Q}_{privacy}^{k-1}\geq M_{k-1}$, we have $A\mathcal{Q}_{privacy}^{k-1}A' \geq M^-_{k-1}$.
Notice that $\phi(X,\mathit\Gamma)$ is linear in $X$. Then we have 
\begin{eqnarray*}
\phi(A\mathcal{Q}_{privacy}^{k-1}A',K_k)\geq \phi(M^-_{k-1},K_k) \geq \phi(M^-_{k-1},\mathit\Gamma^M_k),
\end{eqnarray*}
which is $\mathcal{Q}_{privacy}^k\geq M_k$.
\end{proof}
~

As an estimate of $x_k$, $\hat{x}^{svr}_{k|k}$ also has its associated error covariance.
From the proof of Theorem \ref{thm:LQG2-PrivacyPerformance}, we find that $K^{svr}_k=K_k$ and $P^{svr}_{k|k}=P_{k|k}$, which shows that $P^{svr}_{k|k}$ should not be the real error covariance associated with $\hat{x}^{svr}_{k|k}$.
We denote the real error covariance associated with $\hat{x}^{svr}_{k|k}$ as $P^{svr.real}_{k|k}$.

\begin{lemma}
It holds that
\begin{eqnarray*}
P^{svr.real}_{k|k}= P_{k|k}+\mathcal{Q}_{privacy}^k.
\end{eqnarray*}
\end{lemma}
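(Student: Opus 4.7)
The plan is to decompose the error of the public estimate as $x_k - \hat{x}^{pub}_{k|k} = (x_k - \hat{x}_{k|k}) + (\hat{x}_{k|k} - \hat{x}^{pub}_{k|k})$, expand the outer product, and take the conditional expectation given $\bm{Y}_k$. The two diagonal terms deliver $P_{k|k}$ and $\mathcal{Q}_{privacy}^k$ immediately from their definitions, so the whole burden of the proof reduces to showing that the two cross terms vanish in expectation.

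To dispatch the cross terms I would first unroll the recursion from Theorem~\ref{thm:LQG2-PrivacyPerformance},
$$\hat{x}^{pub}_{k|k}-\hat{x}_{k|k} = (I-K_k C) A\,(\hat{x}^{pub}_{k-1|k-1}-\hat{x}_{k-1|k-1}) + K_k \delta_k,$$
starting from $\hat{x}^{pub}_{0|0}-\hat{x}_{0|0}=0$. This gives $\hat{x}^{pub}_{k|k}-\hat{x}_{k|k}$ as an explicit linear combination of $\delta_1,\ldots,\delta_k$ whose coefficients involve only the system matrices and the Kalman gains $K_i$. Because $K_i$ is determined purely by the offline covariance recursion, the coefficients are deterministic and carry no dependence on $\bm{Y}_k$. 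Since by assumption the $\delta_i$'s are i.i.d.\ Gaussian and independent of the process noise, the measurement noise, and the initial state, the deviation $\hat{x}^{pub}_{k|k}-\hat{x}_{k|k}$ is jointly independent of $(x_k,\bm{Y}_k)$.

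With this independence in hand I would apply the tower property by conditioning further on $(\bm{Y}_k,\delta_1,\ldots,\delta_k)$. Under this finer $\sigma$-algebra the factor $\hat{x}_{k|k}-\hat{x}^{pub}_{k|k}$ is deterministic and pulls out of the expectation, while
$$\bm{E}\bigl[x_k-\hat{x}_{k|k}\,\big|\,\bm{Y}_k,\delta_1,\ldots,\delta_k\bigr] = \bm{E}\bigl[x_k-\hat{x}_{k|k}\,\big|\,\bm{Y}_k\bigr] = 0$$
by the defining property of the Kalman estimate, using the independence just established to drop the extra conditioning. Averaging back over $\delta_1,\ldots,\delta_k$ annihilates both cross terms and leaves $P^{pub.real}_{k|k} = P_{k|k} + \mathcal{Q}_{privacy}^k$.

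The only delicate point is the independence claim above, which in turn relies on the observation that the standard Kalman gains are data-independent; once this is recognized, the rest is the familiar orthogonality-of-innovations argument and no further computation is required. I do not anticipate needing any structural identities beyond Theorem~\ref{thm:LQG2-PrivacyPerformance} and the defining projection property $\bm{E}[x_k-\hat{x}_{k|k}\mid\bm{Y}_k]=0$.
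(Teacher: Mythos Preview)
Your decomposition and tower-property strategy match the paper's, but the independence claim you single out as ``the only delicate point'' is actually false as stated. You assert that $\hat{x}^{pub}_{k|k}-\hat{x}_{k|k}$ is jointly independent of $(x_k,\bm{Y}_k)$ because the $\delta_i$ are independent of the primitive noises and initial state. This overlooks the closed loop: the applied input is $u_i=L_i\hat{x}^{pub}_{i|i}=L_i(\hat{x}_{i|i}+\hat{d}_i)$, so the state $x_k$ (and hence each $y_i$) carries the accumulated effect of $\hat{d}_0,\ldots,\hat{d}_{k-1}$ through the plant dynamics. Since $\hat{d}_k$ is a linear combination of $\delta_{1:k}$ and $x_k$ depends on $\delta_{1:k-1}$, the two are correlated in general, and your justification for dropping the $\delta$-conditioning from $\bm{E}[x_k-\hat{x}_{k|k}\mid\bm{Y}_k,\delta_{1:k}]$ does not follow from this premise.

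The repair is exactly what the paper does: shift the independence argument from $\hat{d}_k$ versus $(x_k,\bm{Y}_k)$ to the \emph{estimation error} $e_k=x_k-\hat{x}_{k|k}$ versus $(\bm{Y}_k,\delta_{1:k})$. Because the true filter in (\ref{eqn:state-time-update}) uses the same $u_{k-1}$ as the plant, the input cancels in the error recursion, leaving
\[
e_k=(I-K_kC)(Ae_{k-1}+w_{k-1})-K_kv_k,
\]
which is a deterministic linear function of $(x_0-\bar{x}_0,w_{0:k-1},v_{1:k})$ only. Hence $e_k$ is independent of $\delta_{1:k}$, and (by standard Kalman orthogonality in the Gaussian setting) also of $\bm{Y}_k$; together, $\bm{E}[e_k\mid\bm{Y}_k,\delta_{1:k}]=0$, after which your tower step finishes the argument verbatim. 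So your overall plan is right; only the object whose independence you invoke has to change from $\hat{d}_k$ to $e_k$.
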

\begin{proof}
We have
\begin{eqnarray*}
&&P^{svr.real}_{k|k}\\
\!\!\!&=&\!\!\! \bm{E}\Big[(x_k-\hat{x}^{svr}_{k|k})(x_k-\hat{x}^{svr}_{k|k})'|\bm{Y}_k\Big]\\
\!\!\!&=&\!\!\! \bm{E}\Big[(x_k\!-\!\hat{x}_{k|k}\!+\!\hat{x}_{k|k}\!-\!\hat{x}^{svr}_{k|k})(x_k\!-\!\hat{x}_{k|k} \!+\!\hat{x}_{k|k}\!-\!\hat{x}^{svr}_{k|k})'|\bm{Y}_k\Big]\\
\!\!\!&=&\!\!\! P_{k|k}+\mathcal{Q}_{privacy}^k+\bm{E}\Big[(x_k-\hat{x}_{k|k})(\hat{x}_{k|k} -\hat{x}^{svr}_{k|k})'|\bm{Y}_k\Big]\\
&&\!\!\!+\bm{E}\Big[(\hat{x}_{k|k} -\hat{x}^{svr}_{k|k})(x_k-\hat{x}_{k|k})'|\bm{Y}_k\Big].
\end{eqnarray*}
Notice that $\hat{x}_{k|k}$ is the linear combination of $\bm{Y}_k$, $\hat{x}^{svr}_{k|k}$ is the linear combination of $\bm{Y}_k$ and $\{\delta_1,...,\delta_k\}$, and $x_k-\hat{x}_{k|k}$ is independent of $\bm{Y}_k$ and $\{\delta_1,...,\delta_k\}$. Then the two expectation terms in the last equality equals 0.
\end{proof}


\begin{remark}\itshape
The local design scheme misleads the server's knowledge so that $\hat{x}^{svr}_{k|k}$ is deviated from the true one $\hat{x}_{k|k}$.
From another perspective, the deviation is equivalent to that the server's seized estimate $\hat{x}^{svr}_{k|k}$ is correct but has a larger error covariance $P^{svr.real}_{k|k}$, which is unknown to the server.

\end{remark}

\begin{remark}[A Doubtful Server]\itshape
In our paper, the server entirely trusts the data provided by the user and has no data analysis mechanism. 
If the server is ``doubtful'' and has an additional detector to analyze the received measurement sequence $z_k$, it will find that the covariance of $z_k$ is not as it supposed to be the one in the provided parameters.
The worst case is that the server successfully recovers $\Sigma_\delta$, the covariance of the injected noise, and it also makes the correct guess that it is due to noise injection in the measurement sequence (notice that even the server finds the mismatch of the measurement covariance, there are still many possibilities how the user makes the situation).
Then the server could adjust the knowledge of the measurement noise covariance $R$ by $R^{sl}=R+\Sigma_\delta$, and develop its own filtering process to estimate state $x_k$:
\begin{eqnarray*}
\hat{x}^{sl}_{k|k-1} \!\!\!&=&\!\!\! A\hat{x}^{sl}_{k-1|k-1}+Bu_{k-1}, \\
P^{sl}_{k|k-1} \!\!\!&=&\!\!\! AP^{sl}_{k-1|k-1}A' + Q,  \\
K^{sl}_{k} \!\!\!&=&\!\!\! P^{sl}_{k|k-1}C'(CP^{sl}_{k|k-1}C' + R^{sl})^{-1},  \\
\hat{x}^{sl}_{k|k} \!\!\!&=&\!\!\! \hat{x}^{sl}_{k|k-1} + K^{sl}_{k}(z_{k} - C\hat{x}^{sl}_{k|k-1}),  \\
P^{sl}_{k|k} \!\!\!&=&\!\!\! (I -K^{sl}_{k}C)P^{sl}_{k|k-1}, 
\end{eqnarray*}
where the superscript ``sl'' represents the corresponding variables are the server's local version.
This case is equivalent to that the user shares its privacy parameters to the server, as in many previous studies \cite{Tanaka2017ACC, Hale2018ACC, Yazdani2023TAC}. Nevertheless, the knowledge of the server is still corrupted by the noise and hence the user still preserves certain level of privacy. 

\end{remark}

\begin{remark}\itshape

A feasible analysis on the state trajectory which can be done by the server is to smooth past estimates after the process ends: the server could compute $\hat{x}^{svr}_{k|T}$, where $T$ is the time horizon of the process. Two problems can be visited: 
1) what is smoothed result $\hat{x}^{svr}_{k|T}$ like and how it is affected by the future information where noise injection also exists,
2) the comparison of $\hat{x}^{svr}_{k|T}$ with both $\hat{x}_{k|k}$ and $\hat{x}_{k|T}$, respectively.

\end{remark}
\subsection{LQG Control Performance} \label{sec:LQG2-control-performance}

The server computes the control law based on the knowledge of the user's state estimate. 
Under the privacy scheme, the server computes the control input as
\begin{eqnarray}
 u_k=L_k\hat{x}^{svr}_{k|k}.
\end{eqnarray}
Define
\begin{eqnarray*}
\hat{d}_k\triangleq\hat{x}^{svr}_{k|k}-\hat{x}_{k|k}.
\end{eqnarray*}
Hence, the true dynamics now is as follows:
\begin{eqnarray} \label{eqn:LQG2-system-after-scheme}
\aligned
x_{k+1}&=A x_k+B u_k+w_k, \\
u_k&= L_k(\hat{x}_{k|k}+\hat{d}_k).
\endaligned
\end{eqnarray}

In eqn. (\ref{eqn:LQG2-system-after-scheme}), we can see the caused effect by the privacy scheme. 
The injected noise also creates a deviation in the control input which will accumulate through the closed loop, and hence makes the system performance complex to analyze.
The control law set $\{u_k\}$ is non-optimal now; therefore, the resulting state trajectory $\{x_k\}$ is also non-optimal. 
Assume that the noise sequences $\{w_k\}$ and $\{v_k\}$ are given. 
Let the state trajectory under the optimal control law $\{u^\star_k\}$ be $\{x^\star_k\}$ and its  estimate be $\{\hat{x}^\star_{k|k}\}$ in the system without privacy scheme. 
We can see that $\{u_k\}$, $\{x_k\}$ and $\{\hat{x}_{k|k}\}$ are all deviated from $\{u^\star_k\}$, $\{x^\star_k\}$ and $\{\hat{x}^\star_{k|k}\}$, respectively, and hence the LQG objective is also affected.

Then we consider the sacrifice in the LQG performance caused by the privacy scheme.
Without misunderstanding, we still use the notation $\mathcal{O}_{0:T}$ to denote the objective under the privacy scheme, which also includes the effect of the uncertainty of $\delta_k$ now.
Define the metric of the average performance loss in LQG control as
\begin{eqnarray}
\mathcal{Q}_{LQG}\triangleq{1\over T}(\mathcal{O}_{0:T}-\mathcal{O}^\star_{0:T}).
\end{eqnarray}

For the proposed privacy scheme, one may have concern that the deviations in control inputs will accumulate through the closed loop and the system dynamics becomes complex to analyze.
In the following theorem, we are able to show that the LQG performance still has a neat closed-form expression.

\begin{theorem} \label{thm:LQG2-obj}
Under the privacy scheme, it holds that
\begin{eqnarray*}
\mathcal{O}_{0:T}\!\!\!&=&\!\!\!\bm{E}(x_0'S_0x_0)+\sum_{k=0}^{T-1}\mathrm{tr}(S_{k+1}Q) + \sum_{k=0}^{T-1}\mathrm{tr}(\mathit\Phi_{k}P_k)\\
&&\!\!\!+\sum_{k=0}^{T-1}\mathrm{tr}\Big(\mathit\Phi_{k}\mathcal{Q}_{privacy}^k \Big),
\end{eqnarray*}
where
\begin{eqnarray*}
\mathit\Phi_k=A'S_{k+1}B(B'S_{k+1}B+U)^{-1}B'S_{k+1}A.
\end{eqnarray*}
\end{theorem}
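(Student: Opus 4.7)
My plan is to establish a completion-of-squares identity for the LQG cost that is valid for \emph{any} admissible control, and then substitute the privacy-distorted law $u_k=L_k(\hat{x}_{k|k}+\hat{d}_k)$ with $\hat{d}_k\triangleq \hat{x}^{pub}_{k|k}-\hat{x}_{k|k}$. The key structural fact to exploit is the independence property of $\hat{d}_k$ that is implicit in Theorem~\ref{thm:LQG2-PrivacyPerformance}.

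First I would telescope $\bm{E}[x_k'S_k x_k]-\bm{E}[x_{k+1}'S_{k+1}x_{k+1}]$ along the true trajectory (\ref{eqn:state-process}). Using the Riccati recursion (\ref{eqn:LQG-S-recursion}) together with $\bm{E}[w_k]=0$ and $\bm{E}[w_k w_k']=Q$, a calculation familiar from the standard LQG derivation \cite{Betsekas1995} gives
\begin{eqnarray*}
\mathcal{O}_{0:T}=\bm{E}[x_0'S_0 x_0]+\sum_{k=0}^{T-1}\mathrm{tr}(S_{k+1}Q)+\sum_{k=0}^{T-1}\bm{E}[\Delta_k' M_k \Delta_k],
\end{eqnarray*}
where $\Delta_k\triangleq u_k-L_k x_k$ and $M_k\triangleq B'S_{k+1}B+U$. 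Setting $u_k=L_k\hat{x}_{k|k}$ recovers $\mathcal{O}^\star_{0:T}$; under the privacy-distorted control one instead has $\Delta_k=-L_k e_k+L_k\hat{d}_k$ with $e_k\triangleq x_k-\hat{x}_{k|k}$.

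Next I would expand the quadratic. From definition (\ref{eqn:LQG-gain}), a direct computation shows $L_k' M_k L_k=\mathit\Phi_k$, so
\begin{eqnarray*}
\bm{E}[\Delta_k' M_k \Delta_k]=\bm{E}[e_k'\mathit\Phi_k e_k]+\bm{E}[\hat{d}_k'\mathit\Phi_k \hat{d}_k]-2\bm{E}[e_k'\mathit\Phi_k \hat{d}_k].
\end{eqnarray*}
The first term equals $\mathrm{tr}(\mathit\Phi_k P_k)$ by the Kalman filter error covariance. For the second, unrolling the recursion of Theorem~\ref{thm:LQG2-PrivacyPerformance} from $\hat{d}_0=0$ writes $\hat{d}_k$ as a deterministic linear combination of $\delta_1,\ldots,\delta_k$; since the recursion (\ref{eqn:Q-privacy-recursion}) is then deterministic, $\bm{E}[\hat{d}_k\hat{d}_k']=\mathcal{Q}_{privacy}^k$ and the term equals $\mathrm{tr}(\mathit\Phi_k \mathcal{Q}_{privacy}^k)$.

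The main obstacle is the cross term. The crucial observation is that $e_k$ is a linear function of $x_0$, $\{w_j\}_{j<k}$, and $\{v_j\}_{j\le k}$ only, while (by the unrolling above) $\hat{d}_k$ depends solely on the injected noise $\{\delta_j\}_{j\le k}$. Since $\{\delta_j\}$ is i.i.d.\ Gaussian and independent of $x_0$, $\{w_j\}$, and $\{v_j\}$, the vectors $e_k$ and $\hat{d}_k$ are independent; combined with $\bm{E}[e_k]=0$ this forces $\bm{E}[e_k\hat{d}_k']=0$, so the cross term vanishes. Summing the three pieces over $k$ delivers the stated formula for $\mathcal{O}_{0:T}$.
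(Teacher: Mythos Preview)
Your argument is correct, and it differs from the paper's route in a useful way. The paper proves the result by backward dynamic programming: it defines the conditional cost-to-go $\mathcal{O}_{k:T}=\bm{E}[\sum_{i=k}^{T}H_i\mid\mathbb{I}_k]$, splits the conditioning as $\bm{E}_{\delta_k}[\bm{E}_{x_k,w_k,v_{k+1}}[\cdot\mid\mathbb{I}_k,\delta_k]\mid\mathbb{I}_k]$, completes the square at each stage, and carries along three running scalars $r_k,t_k,\phi_k$ that collect respectively the $\mathrm{tr}(S_{k+1}Q)$, $\mathrm{tr}(\mathit\Phi_kP_{k|k})$, and $\mathrm{tr}(\mathit\Phi_k\mathcal{Q}_{privacy}^k)$ contributions. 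Your approach instead establishes the unconditional completion-of-squares identity once, valid for any adapted $u_k$, and only afterwards substitutes the specific law and exploits the statistical structure of $e_k$ and $\hat d_k$.

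The key step that makes your shortcut work---and that the paper's conditional machinery handles implicitly---is the observation that the Kalman innovation $e_k=x_k-\hat{x}_{k|k}$ obeys the control-free recursion $e_{k+1}=(I-K_{k+1}C)(Ae_k+w_k)-K_{k+1}v_{k+1}$, so $e_k$ is driven solely by $x_0-\bar{x}_0,\{w_j\}_{j<k},\{v_j\}_{j\le k}$, while Theorem~\ref{thm:LQG2-PrivacyPerformance} shows $\hat d_k$ is driven solely by $\{\delta_j\}_{j\le k}$; independence of the two noise streams then kills the cross term outright. This is cleaner and shorter than the paper's derivation, which never isolates that independence explicitly but instead absorbs it into the iterated conditional expectations. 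The paper's approach, on the other hand, would generalize more readily to settings where the injected perturbation is allowed to depend on $\bm{Y}_k$ (so that such independence fails), since it only needs $\bm{E}[\hat d_k\hat d_k'\mid\mathbb{I}_k]$ to be deterministic.
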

\begin{proof}
In appendix.
\end{proof}

We find that the employment of the privacy scheme produce an additional item in the objective value compared with eqn. (\ref{eqn:LQG-obj-opt}).
This item is related to the privacy metric $\mathcal{Q}_{privacy}^k$, which mathematically says that the LQG performance loss is caused by the privacy preservation.
Hence, the performance loss of the LQG control is given as
\begin{eqnarray} \label{eqn:Q_LQG}
\mathcal{Q}_{LQG} ={1\over T}\sum_{k=0}^{T-1}\mathrm{tr}\Big(\mathit\Phi_{k}\mathcal{Q}_{privacy}^k \Big).
\end{eqnarray}

Furthermore, by analyzing the term $\mathcal{Q}_{LQG}$, we find that it is bounded as time approaches infinity when the injected noise has bounded covariance, which means that in the long run the privacy scheme will not cause the instability of the control performance. This result is presented as follows.

\begin{theorem} \label{thm:boundness-Q-LQG}
When $\Sigma_\delta<+\infty$, it holds that
\begin{eqnarray} \label{eqn:Q_LQG}
\lim_{T\rightarrow\infty}{1\over T}\sum_{k=0}^{T-1}\mathrm{tr}\Big(\mathit\Phi_{k}\mathcal{Q}_{privacy}^k \Big)<+\infty.
\end{eqnarray}

\end{theorem}
\begin{proof}
Firstly, in the previous section, we have assumed that $(A, B)$ controllable and $(\sqrt{W}, A)$ is detectable.
Based on the two conditions, when $T\rightarrow\infty$, for a given $k$, $S_k$ defined in eqn. (\ref{eqn:LQG-S-recursion}) converges to a steady value \cite{Betsekas1995}. We define
\begin{eqnarray} \label{eqn:steady-S}
S=\lim_{T\rightarrow\infty} S_k.
\end{eqnarray}
Then $S$ is the unique solution satisfying
\begin{eqnarray}
S=A'S A+W-A'S B(U+B'S B)^{-1}B'S A.  \label{eqn:S-time-invariant}
\end{eqnarray}
Since for any given $k$, the limit in eqn. (\ref{eqn:steady-S}) exists, we can also write $\lim\limits_{k\rightarrow\infty} S_k = S$.
Accordingly, $\mathit\Phi_{k}$ also approaches to a steady value:
\begin{eqnarray*}
\lim_{k\rightarrow\infty} \mathit\Phi_{k}=A'SB(B'SB+U)^{-1}B'SA.
\end{eqnarray*}
Secondly, since $K_k$ is the Kalman gain, $(I-K_kC)A$ is stable \cite{AndersonMoore1979}, i.e., the spectrum radius $\rho\Big((I-K_kC)A\Big)<1$.
Moreover, based on the assumed conditions $(A, \sqrt{Q})$ is stabilizable and $(C, A)$ is detectable, when $T\rightarrow\infty$, $K_k$ reaches a steady value according to the property of Kalman filtering \cite{AndersonMoore1979}:
\begin{eqnarray*}
K=\lim_{k\rightarrow\infty} K_k.
\end{eqnarray*}
Let $k\rightarrow\infty$, and eqn. (\ref{eqn:Q-privacy-recursion}) becomes 
\begin{eqnarray*} 
\lim_{k\rightarrow\infty}\!\!\mathcal{Q}_{privacy}^k\!=\!(I\!-\!K C)A\!\lim_{k\rightarrow\infty}\!\!\mathcal{Q}_{privacy}^{k-1}A'(I\!-\!K C)' \!+\!K\Sigma_\delta K'.
\end{eqnarray*}
The existence of $\lim\limits_{k\rightarrow\infty}\mathcal{Q}_{privacy}^k$ needs to be verified.
Consider the Lyapunov equation
\begin{eqnarray} \label{eqn:Q-Lyapunov-eqn}
X=(I-K C)AXA'(I-K C)' +K\Sigma_\delta K'.
\end{eqnarray}
Since $(I-K_kC)A$ is stable, we also have that $(I-KC)A$ is stable. 
When $\Sigma_\delta$ is bounded, the existence of a unique solution to (\ref{eqn:Q-Lyapunov-eqn}) is guaranteed \cite{Kailath2000}.
Then the limit $\lim\limits_{k\rightarrow\infty}\mathcal{Q}_{privacy}^k$ exists.
Moreover, it can be explicitly given as
\begin{eqnarray*}
\lim\limits_{k\rightarrow\infty}\mathcal{Q}_{privacy}^k=\sum_{k=0}^{\infty}\Big[(I-K_kC)A\Big]^k K\Sigma_\delta K' \Big[A'(I-K_kC)'\Big]^k,
\end{eqnarray*}
where we can see that it is linear with $\Sigma_\delta$.

Since both $\lim\limits_{T\rightarrow\infty} \mathit\Phi_{k}$ and $\lim\limits_{k\rightarrow\infty} \mathcal{Q}_{privacy}^k$  are bounded, we have
\begin{eqnarray*}
\lim_{T\rightarrow\infty}{1\over T}\sum_{k=0}^{T-1}\mathrm{tr}\Big(\mathit\Phi_{k}\mathcal{Q}_{privacy}^k \Big)=\lim_{k\rightarrow\infty} \mathrm{tr}\Big(\mathit\Phi_{k}\mathcal{Q}_{privacy}^k\Big)<+\infty.
\end{eqnarray*}
\end{proof}

\begin{remark}\itshape
From the proof of Theorem \ref{thm:boundness-Q-LQG}, we can see that the asymptotic stability of the system under the proposed privacy scheme is essentially guaranteed by the effects of both the Kalman filter and the original optimal LQG control.
The form of privacy scheme, which is the white noise injection, also matters, as the influence of the noise happens to be refrained by the Kalman filter and the feedback control. 
If the injection scheme is sophisticated specified rather than i.i.d. noise injection, the stability of the system needs to be further verified.

\end{remark}

Moreover, we can extend the result to a system model with a general stable state feedback gain. We present the following extension.

\begin{corollary}
Consider a reduced original system that the user directly sends its state $x_k$ to the server and the server computes a general linear state feedback control policy:
\begin{eqnarray} \label{eqn:LQG1}
u_k&= F_k x_k,
\end{eqnarray}
where the feedback gain $F_k$ satisfies that $A+BF_k$ is stable.
In this scenario, the user's privacy scheme is to send $z_k=x_k+\delta_k$ to the server.
Then the objective eqn. (\ref{eqn:obj2}) under this control policy and privacy scheme follows
\begin{eqnarray*}
\mathcal{O}_{0:T} &=& x_0'\Omega_0x_0+\sum_{k=0}^{T-1}\mathrm{tr}(\Omega_{k+1}Q) \\
&&+\sum_{k=0}^{T-1}\mathrm{tr}\Big(F_k'(U+B'\Omega_{k+1}B)F_k\Sigma_\delta \Big),
\end{eqnarray*}
where $\Omega_{k}$ follows
\begin{eqnarray*} 
\Omega_T\!\!\!\!&=&\!\!\!\! W, \label{eqn:LQG-Omega-terminal}\\
\Omega_k\!\!\!\!&=&\!\!\!\!W+F_k'UF_k+(A+BF_k)'\Omega_{k+1}(A+BF_k). \label{eqn:LQG-Omega-recursion}
\end{eqnarray*}

\end{corollary}
\begin{proof}
The proof can be similarly obtained following the procedures of the proof of Theorem \ref{thm:LQG2-obj}.
\end{proof}

\subsection{Optimization Problem}

After the preious analyisis, we can see that the privacy preservation is gained at the price of the LQG control performance. 
To study the trade-off, we propose the following optimization problem: to maximize the privacy metric $\mathcal{Q}_{privacy}$ when the loss of LQG control performance $\mathcal{Q}_{LQG}$ is required to be under a given level $\alpha$. 
This problem is formulated as follows.

\begin{problem}\label{prblm:optimization2} 
\begin{eqnarray*}
&  \max\limits_{\Sigma_\delta} \!\!\! &\mathrm{tr}(\mathcal{Q}_{privacy}) \\
 &  \mathrm{s.t.} \!\!\!&   \Sigma_\delta\geq 0, \\
& & \mathcal{Q}_{LQG}\leq \alpha,  \\
& & \mathcal{Q}_{LQG} ={1\over T}\sum_{k=0}^{T-1}\mathrm{tr}\Big(\mathit\Phi_{k}\mathcal{Q}_{privacy}^k \Big),\\
& & \mathcal{Q}_{privacy}^{k}\!=\!(I\!-\!K_k C)A\mathcal{Q}_{privacy}^{k-1}A'(I\!-\!K_k C)'
 \!+\! K_k\Sigma_\delta K_k', \\
&&~~~~~~~~~~~~~~~~~~~~~~~~~~~~~~~~~~~~~~~~~~~~ k=1,2,...,T.
\end{eqnarray*}
\end{problem}

This problem turns out to be a semidefinite programming (SDP) problem, which can be solved efficiently by numerical methods for convex optimization \cite{Boyd2004}.
For the theoretical analysis, this problem is taken as solved at this stage. One algorithm for solving the problem using MATLAB software CVX \cite{CVX} is stated as follows.

\begin{algorithm}
	\caption{Numerical Solution to Problem \ref{prblm:optimization2} with CVX }
	\begin{algorithmic}[1]
           \STATE  \textbf{Paremeters:} system order $n$, measurement order $m$, time horizon $T$, the required control performance level $\alpha$.
		\STATE	\textbf{Input:} matrices $A$, $B$, $C$, $Q$, $R$, $W$, and $U$.  \\
~

		\STATE  Compute $S_k$, $L_k$, and $\mathit\Phi_k$ backwards from time $T$ to 0 according to eqn. (\ref{eqn:LQG-S-terminal})-(\ref{eqn:LQG-gain}).
          \STATE Compute $K_k$ from time 0 to $T$ according to eqn. (\ref{eqn:error-covariance-time-update}), (\ref{eqn:kalman-gain}), and (\ref{eqn:error-covariance-measurement-update}). \\

~
          \STATE \textbf{Begin CVX: solving SDP problem}	
          \STATE ~~~Declare variables: $\mathcal{Q}^k_{privacy}$, $\Sigma_{\delta}$.
		\STATE ~~~Objective: to maximize $\mathrm{tr}(\mathcal{Q}_{privacy})$.
           \STATE ~~~Constraints:
		\STATE ~~~~~~(1) $\mathcal{Q}_{privacy}$ is constrained by eqn. (\ref{eqn:Q_privacy});
		\STATE ~~~~~~(2) $\mathcal{Q}_{LQG}$ is constrained by eqn. (\ref{eqn:Q_LQG});
 		\STATE ~~~~~~(3) let $\mathcal{Q}_{LQG}$ be less than or equal to $\alpha$;
		\STATE ~~~~~~(4) vectorize both sides of eqn. (\ref{eqn:Q-privacy-recursion}) for $i=0, ..., T$, \\~~~~~~~~~~and pile the outcomes into one equality.
  
           \STATE  \textbf{End CVX}
	\end{algorithmic}
\end{algorithm}

\subsection{Special Case Study: Infinite-Time Horizon}

Particularly, we consider a special case that the time horizon $T$ becomes infinite.
Based on the analysis in the proof of Theorem \ref{thm:boundness-Q-LQG}, the system approaches to steady behaviors when time goes infinity. 
For the estimator, the Kalman gain converges: $\lim\limits_{k\rightarrow\infty}K_k = K$.
For the controllor, when $T\rightarrow\infty$, $\lim\limits_{k\rightarrow\infty}S_k=S$, where $S$ is given by eqn. (\ref{eqn:S-time-invariant}). Meanwhile, the associated LQG gain $L_k$ also becomes time-invariant.

According to the proof of Theorem \ref{thm:boundness-Q-LQG}, the steady value $\lim\limits_{k\rightarrow\infty}\mathcal{Q}_{privacy}^k$ exists.
We have the average deviation covariance in this infinite-time case as follows:
\begin{eqnarray*}
\mathcal{Q}_{privacy}=\lim_{T\rightarrow\infty}{1\over T}\sum_{k=0}^{T-1} \mathcal{Q}_{privacy}^k.
\end{eqnarray*}
Then we find that
\begin{eqnarray*}
\mathcal{Q}_{privacy}=\lim_{k\rightarrow\infty}\mathcal{Q}_{privacy}^k,
\end{eqnarray*}
i.e., $\mathcal{Q}_{privacy}$ coincides with the steady state of $\mathcal{Q}_{privacy}^k$.
Then we have
\begin{eqnarray}  \label{eqn:Q-privacy-infinite}
\mathcal{Q}_{privacy}=(I\!-\!K C)A\mathcal{Q}_{privacy}A'(I\!-\!K C)' \!+\!K\Sigma_\delta K'.
\end{eqnarray}

We also present the lower bound of $\mathcal{Q}_{privacy}$ in this case.

\begin{corollary}
Let $M$ and $M^-$ be the solutions to
\begin{eqnarray*}
M^-\!\!\!&=&\!\!\! AMA',  \\
M\!\!\!&=&\!\!\! M^--M^-C'(CM^-C'+\Sigma_\delta)^{-1}CM^-.
\end{eqnarray*}
Then 
\begin{eqnarray*}
\mathcal{Q}_{privacy}\geq M.
\end{eqnarray*}
\end{corollary}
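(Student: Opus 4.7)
The plan is to mimic the argument that proves Theorem \ref{thm:Q-privacy-bound}, but now at the level of the algebraic fixed-point equations rather than the finite-horizon recursions, so that no separate convergence analysis for $\{M_k\}$ is needed. The key observation is that both $\mathcal{Q}_{privacy}$ and $M$ are fixed points of the same linear operator
\[
\phi(X,\mathit\Gamma) \triangleq (I-\mathit\Gamma C)\,X\,(I-\mathit\Gamma C)' + \mathit\Gamma\, \Sigma_\delta\, \mathit\Gamma',
\]
but evaluated at different gains: by eqn. (\ref{eqn:Q-privacy-infinite}), $\mathcal{Q}_{privacy} = \phi(A\mathcal{Q}_{privacy}A', K)$, while by the definition of $M$ one has $M = \phi(AMA', \mathit\Gamma^M)$ with $\mathit\Gamma^M = AMA'C'(CAMA'C' + \Sigma_\delta)^{-1}$, and $\mathit\Gamma^M$ is the minimizer of $\phi(AMA',\cdot)$ in the positive semi-definite order (this is the ``completing-the-square'' identity invoked in the proof of Theorem \ref{thm:Q-privacy-bound}).

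Subtracting the two fixed-point equations and adding and subtracting $\phi(AMA', K)$ gives
\[
\mathcal{Q}_{privacy} - M = \bigl[\phi(A\mathcal{Q}_{privacy}A', K) - \phi(AMA', K)\bigr] + \bigl[\phi(AMA', K) - \phi(AMA', \mathit\Gamma^M)\bigr].
\]
Linearity of $\phi$ in its first argument collapses the first bracket to $(I-KC)A(\mathcal{Q}_{privacy}-M)A'(I-KC)'$, and the second bracket is positive semi-definite by the PSD-order optimality of $\mathit\Gamma^M$. Writing $D \triangleq \mathcal{Q}_{privacy}-M$, $F \triangleq (I-KC)A$, and denoting the second bracket by $H\geq 0$, I obtain the Lyapunov-type inequality $D \geq F D F' + H$.

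The final step is to iterate: a straightforward induction gives $D \geq F^k D (F')^k + \sum_{i=0}^{k-1} F^i H (F')^i$ for every $k\geq 1$. Since $K$ is the steady-state Kalman gain, $F=(I-KC)A$ is stable (as already invoked in the discussion preceding eqn. (\ref{eqn:Q-Lyapunov-eqn})), so $F^k D (F')^k \to 0$ and the positive semi-definite partial sums converge; passing to $k\to\infty$ yields $D \geq 0$, i.e., $\mathcal{Q}_{privacy} \geq M$. The main obstacle I anticipate is purely notational bookkeeping: the two potentially delicate points --- the PSD-order minimality of $\mathit\Gamma^M$ and the existence of the steady-state $M$ --- are both already settled by arguments implicit in Theorem \ref{thm:Q-privacy-bound} (the former by completing the square, the latter by monotone convergence of $\{M_k\}$ together with the stability of $F$), so no genuinely new difficulty arises.
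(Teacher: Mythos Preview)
Your proof is correct but takes a different route from the paper's. The paper's argument is essentially two lines: it observes that $M=\lim_{k\to\infty}M_k$ (with $M_k$ as in Theorem~\ref{thm:Q-privacy-bound}), invokes the already-established inequality $\mathcal{Q}_{privacy}^k\geq M_k$ for every $k$, and passes to the limit on both sides using $\mathcal{Q}_{privacy}=\lim_{k\to\infty}\mathcal{Q}_{privacy}^k$.

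Your approach instead works directly at the steady-state level: you subtract the two fixed-point equations, reduce the question to a Lyapunov-type relation for $D=\mathcal{Q}_{privacy}-M$, and exploit the stability of $F=(I-KC)A$. This is longer, but it buys you independence from the convergence $M_k\to M$, which the paper simply asserts without justification. One small remark: the subtraction you perform actually yields the \emph{equality} $D=FDF'+H$, not merely $D\geq FDF'+H$; since $F$ is stable this already identifies $D$ uniquely as $\sum_{i\geq 0}F^iH(F')^i\geq 0$, so the explicit iteration step can be shortened to a one-line appeal to the discrete Lyapunov equation.
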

\begin{proof}
We can find $M=\lim\limits_{k\rightarrow\infty} M_k$, where $M_k$ is defined in Theorem \ref{thm:Q-privacy-bound}.  Since $\mathcal{Q}_{privacy}^k\geq M_k$ for all $k$, then we have the result.
\end{proof}

In the infinite-time case, the formulation is reduced accordingly.
The loss of LQG performance is simplified as follows:
\begin{eqnarray*}
\mathcal{Q}_{LQG}=\lim_{T\rightarrow\infty}{1\over T}\sum_{k=0}^{T-1} \mathrm{tr}\Big(\mathit\Phi_{k}\mathcal{Q}_{privacy}^k \Big)= \mathrm{tr}(\mathit\Phi \mathcal{Q}_{privacy}),
\end{eqnarray*}
where $\mathit\Phi =\lim\limits_{k\rightarrow\infty}\mathit\Phi_{k}=A'SB(U\!+\!B'SB)^{-1}B'SA$.


Then we study the the following optimization problem:
\begin{problem}  \label{problem:LQG2-infinite}
\begin{eqnarray*}
 &   \max\limits_{\Sigma_\delta}  &\mathrm{tr}(\mathcal{Q}_{privacy})\\
 &  \mathrm{s.t.}   &  \Sigma_\delta\geq 0, \\ 
&& \mathrm{tr}(\mathit\Phi\mathcal{Q}_{privacy})\leq \alpha,\\
&& \mathcal{Q}_{LQG} = \mathrm{tr}(\mathit\Phi \mathcal{Q}_{privacy}), \\
&&\mathcal{Q}_{privacy}=(I\!-\!K C)A\mathcal{Q}_{privacy}A'(I\!-\!KC)' +K\Sigma_\delta K'.
\end{eqnarray*}
\end{problem}
This problem is also a semidefinite positive problem and can be solved efficiently by classical numerical methods \cite{Boyd2004}. Examples and simulations are presented in Section \ref{sec:examples}.

\begin{lemma}\label{lemma:LQG2-scalar}
In the scalar system, the solution to Problem \ref{problem:LQG2-infinite} is given by
\begin{eqnarray*}
\Sigma_\delta^\star={ \alpha \over  \mathit\Phi K^2}\Big[1-A^2(1-KC)^2\Big]
\end{eqnarray*}
and 
\begin{eqnarray*}
\mathcal{Q}^\star_{privacy}={ \alpha \over  \mathit\Phi}.
\end{eqnarray*}
\end{lemma}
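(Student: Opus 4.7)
The plan is to exploit the scalar nature of the problem so that the Lyapunov-type constraint collapses to a one-variable algebraic relation, after which the problem becomes a trivial monotone maximization.

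First, I would specialize eqn.~(\ref{eqn:Q-privacy-infinite}) to the scalar case, where it reads
\begin{equation*}
\mathcal{Q}_{privacy}=A^2(1-KC)^2\,\mathcal{Q}_{privacy}+K^2\Sigma_\delta,
\end{equation*}
and solve for $\mathcal{Q}_{privacy}$ explicitly as
\begin{equation*}
\mathcal{Q}_{privacy}=\frac{K^2\Sigma_\delta}{1-A^2(1-KC)^2}.
\end{equation*}
For this step to be well-posed I must know the denominator is strictly positive; this is a direct consequence of the stability of $(I-KC)A$ observed just after eqn.~(\ref{eqn:Q-Lyapunov-eqn}), which in the scalar setting says exactly $|A(1-KC)|<1$.

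Second, with the steady-state Kalman gain $K$ and the system coefficients $A,C$ all fixed, the displayed expression is strictly increasing and linear in the only remaining decision variable $\Sigma_\delta\geq 0$. Hence maximizing $\mathcal{Q}_{privacy}$ is equivalent to pushing $\Sigma_\delta$ as large as the control-loss constraint allows. Since $\Phi>0$ in the scalar case, the constraint $\mathrm{tr}(\Phi\,\mathcal{Q}_{privacy})\leq\alpha$ becomes $\mathcal{Q}_{privacy}\leq\alpha/\Phi$, and by monotonicity the optimum is attained at equality, giving $\mathcal{Q}^\star_{privacy}=\alpha/\Phi$.

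Third, I would substitute this value back into the scalar Lyapunov relation and solve for $\Sigma_\delta^\star$, producing
\begin{equation*}
\Sigma_\delta^\star=\frac{\alpha\bigl[1-A^2(1-KC)^2\bigr]}{\Phi K^2},
\end{equation*}
which matches the claim. There is essentially no hard step: the only point requiring any care is confirming $1-A^2(1-KC)^2>0$, which simultaneously justifies the algebraic inversion and ensures $\Sigma_\delta^\star>0$ so that the optimizer is admissible. Because this positivity follows immediately from the Kalman-filter stability already invoked in the paper, no additional machinery is needed.
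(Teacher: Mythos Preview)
Your proposal is correct and is precisely the ``direct calculation'' the paper alludes to in its one-line proof. The only minor remark is that your argument implicitly assumes $\Phi>0$ and $K\neq 0$ so the formulas are well defined; these hold under the paper's standing controllability/detectability assumptions, so nothing further is needed.
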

\begin{proof}
The result is solved by direct calculation.
\end{proof}

In the scalar system, we can see the proportional relationship between $\mathcal{Q}^\star_{privacy}$, $\Sigma_\delta^\star$, and $\alpha$.

\section{Perfect Privacy Preservation without Performance Loss} \label{sec:perfect-control}

To the user, the best situation is that the server has deviated knowledge of the states while the user is still able to obtain optimal control inputs. We provide simple results on this direction in this section.

\subsection{System with Accessible State Information}


We first consider a simple scenario. 
Assume that the user is able to directly have the value of $x_k$ and then does not need the sensor.
This is a simplified version of the scenario we consider in this paper.
Then the privacy scheme becomes 
\begin{eqnarray}\label{eqn:privacy-scheme-func}
z_k=x_k+\delta_k.
\end{eqnarray}
In this case, the server will take $z_k$ as the user's state $x_k$.
We denote the trajectory the server sees as $\{x^{svr}_k\}$.
According to the privacy scheme, $x^{svr}_k$ is identical to $z_k$: 
\begin{eqnarray} \label{eqn:LQG1-scheme}
x^{svr}_k=x_k+\delta_k.
\end{eqnarray}
Similarly, the server computes the control input according to
\begin{eqnarray}
u_k=L_k x^{svr}_k.
\end{eqnarray}
Denote the LQG objective in this scenario as $\mathcal{I}_{0:T}$ and the optimal objective as $\mathcal{I}_{0:T}^\star$.

Usually, the added noise $\delta_k$ in the transmitted signal also causes a corresponding additional deviation in the generated control input from the optimal one. 
However, this added $\delta_k$ does not always cause a deviation in the control input. 
As one example, 
if $\delta_k$ lies in $\mathscr{N}(L_k)$, the kernel space of $L_k$, i.e., $L_k\delta_k=0$, we have 
\begin{eqnarray*}
u_k\!\!\!&=&\!\!\! L_kx^{svr}_k=L_k(x_k+\delta_k)=L_kx_k,
\end{eqnarray*}
i.e., the resulted $u_k$ is the optimal control input for the current state $x_k$.
We follow this idea and obtain the following result.
\begin{theorem}
Given an arbitrary time $k$, for the variable $\delta_k$ in eqn. (\ref{eqn:LQG1-scheme}), if 
\begin{enumerate}
\item
$\delta_k\in\mathscr{N}(A)$ when $A$ is not of full rank, 

\item
or $S_{k+1}A\delta_k\in\mathscr{N}(B')$,
\end{enumerate}
it holds that $u_k=L_kx_k$, which is the optimal control input at time $k$.
If for each time $k$, within 1) and 2), there always exists one condition which is satisfied, then $u_k=u^\star_k$ and $x_k=x^\star_k$ for all $k$. Moreover, $\mathcal{I}_{0:T}=\mathcal{I}^\star_{0:T}$.

\end{theorem}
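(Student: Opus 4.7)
The plan is to exploit the explicit form of the LQG gain $L_k=-(B'S_{k+1}B+U)^{-1}B'S_{k+1}A$ and show that under either condition the factor $L_k\delta_k$ collapses to zero. Once that algebraic observation is in place, the rest is a routine induction along the closed loop.

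First I would handle the single-time-step statement. Since $u_k=L_k x^{pub}_k=L_k x_k+L_k\delta_k$, the claim $u_k=L_k x_k$ reduces to $L_k\delta_k=0$. Under condition 1, $A\delta_k=0$ immediately kills the factor $A$ sitting inside $L_k$, so $L_k\delta_k=0$. Under condition 2, $B'S_{k+1}A\delta_k=0$ kills the factor $B'S_{k+1}A$ in $L_k$, again giving $L_k\delta_k=0$. The matrix $(B'S_{k+1}B+U)^{-1}$ is well defined since $U>0$ and $S_{k+1}\geq 0$, so there is no invertibility subtlety.

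Next I would upgrade this pointwise statement to the trajectory-wide one by induction on $k$. The induction hypothesis is that $x_j=x^\star_j$ and $u_j=u^\star_j$ for all $j<k$, where the comparison is made pathwise with the same realization of $\{w_k,v_k\}$; the perfect-state optimal policy is the linear feedback $u^\star_j=L_j x^\star_j$. The base case $x_0=x^\star_0$ is immediate since both trajectories share the same initial state. The inductive step from $j<k$ to $k$ uses $x_k=Ax_{k-1}+Bu_{k-1}+w_{k-1}=Ax^\star_{k-1}+Bu^\star_{k-1}+w_{k-1}=x^\star_k$, and then the pointwise result from the previous paragraph gives $u_k=L_k x_k=L_k x^\star_k=u^\star_k$. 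Consequently the entire closed-loop state and control trajectories coincide with the optimal ones pathwise.

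Finally, equality of $\mathcal{I}_{0:T}$ and $\mathcal{I}^\star_{0:T}$ follows because $\mathcal{I}_{0:T}=\bm{E}[\sum_{k=0}^{T-1}(x_k'Wx_k+u_k'Uu_k)+x_T'Wx_T]$ is determined by the joint distribution of $\{x_k,u_k\}$, which by the induction agrees with that of $\{x^\star_k,u^\star_k\}$ pathwise. I do not expect any genuine obstacle here; the only thing worth flagging is the interpretation of the hypothesis. The conditions must be read as requirements satisfied by the specific realization of $\delta_k$ (or almost surely, which forces the support of $\mathcal{N}(0,\Sigma_\delta)$ to lie in $\mathscr{N}(A)$ or in $\mathscr{N}(B'S_{k+1}A)$, i.e.\ $\Sigma_\delta$ has its range in that kernel). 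With this reading the proof is essentially a direct calculation plus an induction, as outlined above.
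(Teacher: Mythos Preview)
Your proposal is correct and follows the same route as the paper: the key observation in both is that $L_k=-(B'S_{k+1}B+U)^{-1}B'S_{k+1}A$ with $B'S_{k+1}B+U$ invertible, so either $A\delta_k=0$ or $B'S_{k+1}A\delta_k=0$ forces $L_k\delta_k=0$. The paper's proof stops right there, leaving the trajectory-wide claims $u_k=u^\star_k$, $x_k=x^\star_k$, and $\mathcal{I}_{0:T}=\mathcal{I}^\star_{0:T}$ implicit; your induction along the closed loop and the pathwise-coupling argument simply spell out what the paper takes for granted, so your write-up is in fact more complete than the original.
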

\begin{proof}
Notice that $L_k=-(B'S_{k+1}B+U)^{-1}B'S_{k+1}A$ and $B'S_{k+1}B+U$ has full rank.
Either condition 1) or 2) leads to that $L_k\delta_k=0$.
\end{proof}

In this methodology, the variable $\delta_k$ is unnecessary to be random. This methodology may create arbitrary deviation in the server's knowledge of $x_k$ while still maintaining optimal control performance.

However, the problem is, this methodology needs the user to do complex computation before the process. Usually, the reason for a user to employ a server is that it lacks necessary computation capability. But maybe the user could employ another server to compute this particular set of $\delta_k$ for it.

\subsection{System with Inaccessible State Information}

In this scenario, which is mainly studied in this paper, it becomes more difficult to find the perfect preservation scheme than in the previous scenario. We present a result for the system with the matrix $A$ without full rank.

\begin{theorem}
If $A$ is not of full rank, 
for all time $k$, if the variable $\delta_k$ in eqn. (\ref{eqn:LQG2-scheme}) satisfies that $K_k\delta_k\neq 0$ and $AK_k\delta_k=0$, 
%
%
%
then $u_k=u^\star_k$, $\hat{x}_{k|k}=\hat{x}^\star_{k|k}$, and $\hat{x}^{svr}_{k|k}\neq \hat{x}_{k|k}$ for all $k$. Moreover, $\mathcal{O}_{0:T}=\mathcal{O}^\star_{0:T}$.
\end{theorem}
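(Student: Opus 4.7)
Let $d_k \triangleq \hat{x}^{pub}_{k|k} - \hat{x}_{k|k}$. The plan is to exploit the rank deficiency of $A$ so that every deviation $d_k$ lies in $\mathscr{N}(A)$: since $L_k = -(B'S_{k+1}B+U)^{-1}B'S_{k+1}A$ factors through $A$, any $d_k$ with $A d_k = 0$ is annihilated by $L_k$, and so the control input the server returns is insensitive to the privacy perturbation even though $d_k$ itself is nonzero.

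First I would prove $A d_k = 0$ for all $k$ by induction. The base case is immediate because both filters are initialized at $\bar{x}_0$, giving $d_0 = 0$. For the inductive step, Theorem \ref{thm:LQG2-PrivacyPerformance} supplies
\begin{equation*}
d_k = (I - K_k C) A d_{k-1} + K_k \delta_k,
\end{equation*}
so the hypothesis $A d_{k-1} = 0$ collapses the recursion to $d_k = K_k \delta_k$, and the standing assumption $A K_k \delta_k = 0$ then yields $A d_k = 0$. A key enabling observation is that $K_k$ is determined entirely by the offline covariance recursion from $A, C, Q, R$ and does not depend on any realization, so this induction runs independently of the trajectory dynamics below. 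Applying $L_k$ to the identity $d_k = K_k \delta_k$ gives $L_k d_k = 0$, while $d_k = K_k \delta_k \neq 0$ for every $k \geq 1$ by hypothesis, establishing $\hat{x}^{pub}_{k|k} \neq \hat{x}_{k|k}$.

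Second I would run a pathwise coupled induction matching $(x_k, \hat{x}_{k|k}, u_k)$ of the privacy-augmented system to $(x^\star_k, \hat{x}^\star_{k|k}, u^\star_k)$ of the unperturbed optimal system under the same realizations of $\{w_k\}$ and $\{v_k\}$. Because $L_k d_k = 0$, the privacy-perturbed input collapses to $u_k = L_k(\hat{x}_{k|k} + d_k) = L_k \hat{x}_{k|k}$. From matched initial conditions $x_0 = x^\star_0$ and $\hat{x}_{0|0} = \hat{x}^\star_{0|0} = \bar{x}_0$, the inductive hypothesis propagates cleanly through the state equation (\ref{eqn:state-process}), the measurement (\ref{eqn:measurement-process}), the time update (\ref{eqn:state-time-update}), and the measurement update (\ref{eqn:state-measurement-update}), yielding $u_k = u^\star_k$, $x_k = x^\star_k$, and $\hat{x}_{k|k} = \hat{x}^\star_{k|k}$ for every $k$. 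Since these equalities hold realization by realization, taking expectations in the definition (\ref{eqn:obj2}) gives $\mathcal{O}_{0:T} = \mathcal{O}^\star_{0:T}$. The main obstacle is conceptual rather than computational: one must ensure the two inductions are not circularly coupled, which is resolved by the deterministic nature of $K_k$ noted above, so the kernel-containment argument can be carried out before the trajectory induction is attempted.
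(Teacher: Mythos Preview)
Your argument is correct and follows essentially the same route as the paper's proof: both establish inductively that the deviation $\hat{x}^{pub}_{k|k}-\hat{x}_{k|k}$ equals $K_k\delta_k$ at each step (so it is nonzero but annihilated by $A$ and hence by $L_k$), and then propagate the equalities $u_k=u^\star_k$, $\hat{x}_{k|k}=\hat{x}^\star_{k|k}$ through the coupled dynamics. The only cosmetic difference is that you invoke the recursion of Theorem~\ref{thm:LQG2-PrivacyPerformance} to obtain $d_k=(I-K_kC)A d_{k-1}+K_k\delta_k$ and run the kernel argument as a separate lemma before the trajectory induction, whereas the paper interleaves the two by directly verifying $\hat{x}^{pub}_{k|k-1}=\hat{x}^\star_{k|k-1}$ step by step; the content is the same.
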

\begin{proof}
We have $\hat{x}^{svr}_{0|0}=\hat{x}_{0|0}=\hat{x}^\star_{0|0}$ initially. Then 
\begin{eqnarray*}
u_0=L_0\hat{x}^{svr}_{0|0}=L_0\hat{x}^\star_{0|0}=u^\star_0.
\end{eqnarray*}
Then we have
\begin{eqnarray*}
\hat{x}^{svr}_{1|0} =A\hat{x}^{svr}_{0|0}+Bu_0=A\hat{x}^\star_{0|0}+Bu^\star_0=\hat{x}^\star_{1|0}.
\end{eqnarray*}
It is simple to find $\hat{x}_{1|1}=\hat{x}^\star_{1|1}$.
Moreover, since $K_1\delta_1\neq 0$,
\begin{eqnarray*}
\hat{x}^{svr}_{1|1} \!\!\!&=&\!\!\! \hat{x}^{svr}_{1|0} \!+\! K^{svr}_{1}(z_{1} \!-\! C\hat{x}^{svr}_{1|0}) \\
 \!\!\!&=&\!\!\!\hat{x}^\star_{1|0} \!+\! K_{1}(y_{1}+\delta_1 \!-\! C\hat{x}^\star_{1|0})\\
 \!\!\!&\neq&\!\!\!\hat{x}^\star_{1|0} \!+\! K_{1}(y_{1} \!-\! C\hat{x}^\star_{1|0})= \hat{x}^\star_{1|1}=\hat{x}_{1|1}.
\end{eqnarray*}
Consider $k=2$, we have 
\begin{eqnarray*}
u_1\!\!\!&=&\!\!\!L_1\hat{x}^{svr}_{1|1}=L_1\Big[\hat{x}^\star_{1|0} \!+\! K_{1}(z_{1} \!-\! C\hat{x}^\star_{1|0})\Big]\\
\!\!\!&=&\!\!\!L_1\Big[\hat{x}^\star_{1|0} \!+\! K_{1}(y_{1} \!-\! C\hat{x}^\star_{1|0})\Big]+L_1K_1\delta_1\\
\!\!\!&=&\!\!\! L_1\hat{x}^\star_{1|1}+L_1K_1\delta_1.
\end{eqnarray*}
Since $L_1K_1\delta_1=-(B'S_{2}B+U)^{-1}B'S_{2}AK_1\delta_1$ and $AK_1\delta_1=0$, we have $L_1K_1\delta_1=0$.
Then it holds that
\begin{eqnarray*}
u_1=L_1\hat{x}^\star_{1|1}=u^\star_1.
\end{eqnarray*}
Further we have
\begin{eqnarray*}
\hat{x}^{svr}_{2|1}\!\!\!&=&\!\!\!A\hat{x}^{svr}_{1|1}+Bu_1=A\Big[\hat{x}^\star_{1|0} \!+\! K_{1}(z_{1} \!-\! C\hat{x}^\star_{1|0})\Big]+Bu^\star_1\\
\!\!\!&=&\!\!\! A\Big[\hat{x}^\star_{1|0} \!+\! K_{1}(y_{1} \!-\! C\hat{x}^\star_{1|0})\Big]+AK_1\delta_1+Bu^\star_1\\
\!\!\!&=&\!\!\! A\hat{x}^\star_{1|1}+Bu^\star_1=\hat{x}^\star_{2|1}.
\end{eqnarray*}
We extend the derivation to general $k$ and obtain $u_k=u^\star_k$, $\hat{x}_{k|k}=\hat{x}^\star_{k|k}$, and $\hat{x}^{svr}_{k|k}\neq \hat{x}_{k|k}$ for all $k$. We also see that $\hat{x}^{svr}_{k|k-1}= \hat{x}_{k|k-1}=\hat{x}^\star_{k|k-1}$.
\end{proof}



\section{Examples} \label{sec:examples}

In this section, we provide two examples to demonstrate the performance of the privacy schemes.
\begin{example}
We consider a higher-order cooperative system with infinite-time horizon, whose parameters are given as follows:
\begin{align*}
&A=\left[
\begin{array}{ccc}
    0.19 & 0.86 & 0.10  \\
   0.31 &  0.80 &  0.44 \\
   0.13 &  0.43 &  0.40
\end{array}
\right], \qquad
 B=\left[
\begin{array}{cc}
   2.0 & 0.9 \\
   9.1 & 2.0 \\
   1.3 & 8.1 
\end{array}
\right], \\
&C=\left[
\begin{array}{ccc}
    2.0  & 1.6 & 1.2 \\
    2.0  & 2.0 & 1.1
\end{array}
\right],\qquad
Q=\left[
\begin{array}{ccc}
  1.9 & 0.9 & 0.4 \\
   0.9& 2.8 & 2.0 \\
   0.4& 2.0 & 2.4
\end{array}
\right], \\
&R=\left[
\begin{array}{cc}
   7.0 &1.8\\
   1.8 & 0.8
\end{array}
\right], \qquad
W=\left[
\begin{array}{ccc}
  1.8 & 2.0 & 0.5 \\
  2.0 & 9.8 & 0.9 \\
  0.5 &0.9 & 5.4
\end{array}
\right], \\
&U=\left[
\begin{array}{cc}
   4.5 & 1.0 \\
   1.0 & 8.8
\end{array}
\right].
\end{align*}

Given a level $\alpha$ of the LQG control performance $\mathcal{Q}_{LQG}$, the optimal privacy performance $\mathcal{Q}_{privacy}$ is determined by solving Problem \ref{problem:LQG2-infinite}. 
We denote the optimal $\mathcal{Q}_{privacy}$ as $\mathcal{Q}^\star_{privacy}$.
We plot the traces of $\mathcal{Q}^\star_{privacy}$ when $\alpha$ varies from 0 to 50 with step length of 1, which is presented in Fig. \ref{fig:LQG2-tradoff}.

\begin{figure}[htbp]
  \centering
  \vspace{-4mm}
  \includegraphics[scale=0.55]{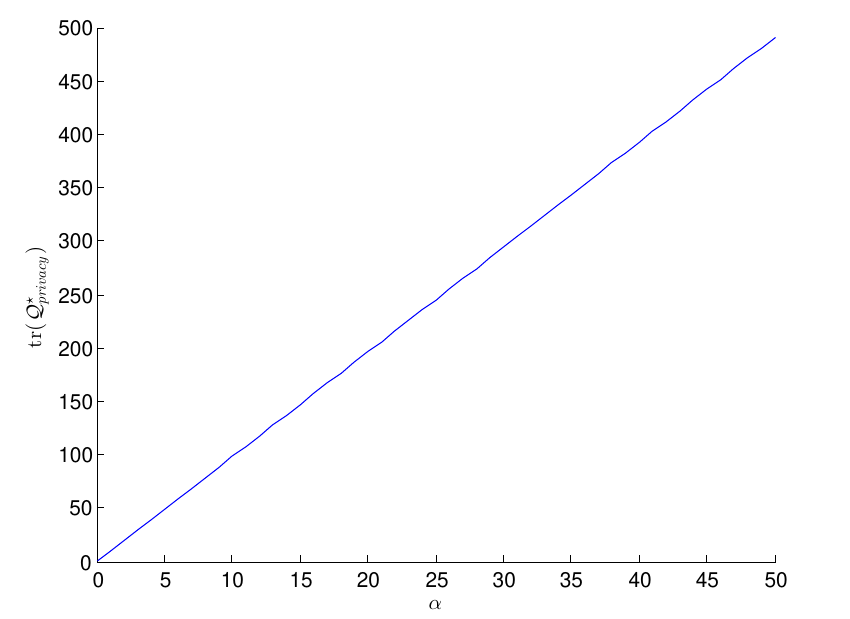}\vspace{-4mm}
  \caption{The plot of $\mathrm{tr}(\mathcal{Q}^\star_{privacy})$ when $\alpha$ varies from 0 to 50.} 
  \label{fig:LQG2-tradoff}
  \vspace{-4mm}
\end{figure}

We find the curve is a straight line. Although we cannot explicitly obtain the property of the curve in a higher-order system, it is proved in the case of scalar system, which is shown in Lemma \ref{lemma:LQG2-scalar}.

Furthermore, given a particular value of $\Sigma_\delta$, we plot the true estimates $\hat{x}_{k|k}$ and the public estimates $\hat{x}^{svr}_{k|k}$. 
Let $\Sigma_\delta=180I$. 
The plots are presented in Fig. \ref{fig:LQG2-estimates}. In the plots, we find that $\hat{x}^{svr}_{k|k}$ is deviated from $\hat{x}_{k|k}$. 

\begin{figure}[htbp]
  \centering
  \vspace{-4mm}
  \includegraphics[scale=0.55]{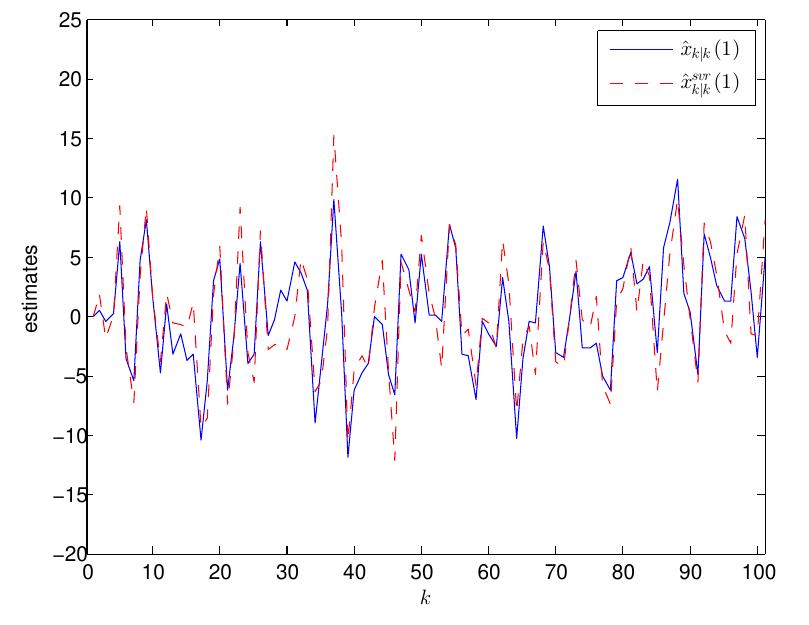}\vspace{-4mm}
  \caption{The plot of the first component of $\hat{x}_{k|k}$ and $\hat{x}^{svr}_{k|k}$. The presented time horizon is 100.} 
  \label{fig:LQG2-estimates}
  \vspace{-2mm}
\end{figure}

Moreover, we plot the traces of several covariances in Fig. \ref{fig:LQG2-covariances}: the deviation covariance $\mathcal{Q}_{privacy}^k$, its lower bound $M_k$, the true estimation error covariance $P_{k|k}$, and the real error covariance $P^{svr.real}_{k|k}$ associated with $\hat{x}^{svr}_{k|k}$. Notice that the curve of $P_{k|k}$ does not coincide with the one of $M_k$. Meanwhile, we can find that $M_k$ is only a loose bound of $\mathcal{Q}_{privacy}^k$.

\begin{figure}[htbp]
  \centering
  \includegraphics[scale=0.55]{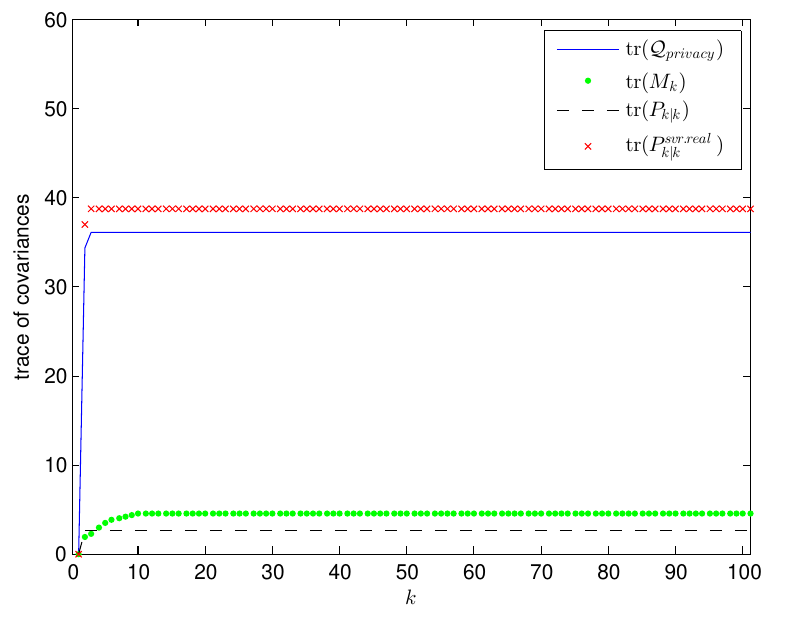}\vspace{-4mm}
  \caption{The plot of the traces of $\mathcal{Q}_{privacy}^k$, $M_k$, $P_{k|k}$, and  $P^{svr.real}_{k|k}$. The presented time horizon is 100.} 
  \label{fig:LQG2-covariances}
  \vspace{-4mm}
\end{figure}

\end{example}

\begin{example}
Consider the heating, ventilation, and air conditioning (HVAC) system of a building, which is rented and occupied by a number of user companies and organizations. 
The building's department of daily maintenance has an intelligent operation center to work for the HVAC control of the whole building. Each company could submit its requirements on environment conditions to the operation center, and then the center helps adjust the conditions for it.
However, as we know, the environment data, such as the temperature, is related to the number of occupancy individuals, and this data can be used to further deduce the position trajectory of individuals in the building, which causes leakage of privacy \cite{Jia2017ICCPS}.

Suppose one of the companies, named Company C.Y.,  is also under the service. 
This company wants to preserve its privacy when asking the operation center to do environment condition control, by a local privacy scheme proposed in this paper. 
For simplicity of presentation, we focus on the control of temperature.
In the office of Company C.Y., the temperature evolves according to eqn. (\ref{eqn:state-process}),
where each sampling time has a distance of 30 minutes, and the state $x_k$ is the difference between the real temperature $T_k$ at time $k$ and the set value of demand: $x_k=T_k-23$ (unit: degree Celsius).
The temperature sensor works according to eqn. (\ref{eqn:measurement-process}).
According to this paper, the company needs to revise the measurement data of the temperature sensor to the center. 
If the company is unable to get access to the digital signal, an easy method is to put a heating/cooling equipment around the sensor to locally change the environment.
Assume the revision function follows eqn. (\ref{eqn:LQG2-scheme}).

The company requires an LQG control of the temperature, and the parameters of the system are given as follows:
$A=1.1$, $B=3$, $C=1$, $Q=1$, $R=1$, $W=5$, $U=3$.

Consider that the company chooses a particular value of $\Sigma_\delta$. Let $\Sigma_\delta=1$. Then we plot for it the true estimates $\hat{x}_{k|k}$ and the public estimates $\hat{x}^{svr}_{k|k}$ within 24 hours.
The plots are presented in Fig. \ref{fig:LQG2-HVAC-curves}. In the plots, we find that $\hat{x}^{svr}_{k|k}$ is deviated from $\hat{x}_{k|k}$. 
The estimated temperature floats by around 2 degrees. As we need to balance the privacy preservation and the environment comfort, one can further tune the parameters to optimize the result.


\begin{figure}[htbp]
  \centering
  \vspace{-3mm}
  \includegraphics[scale=0.55]{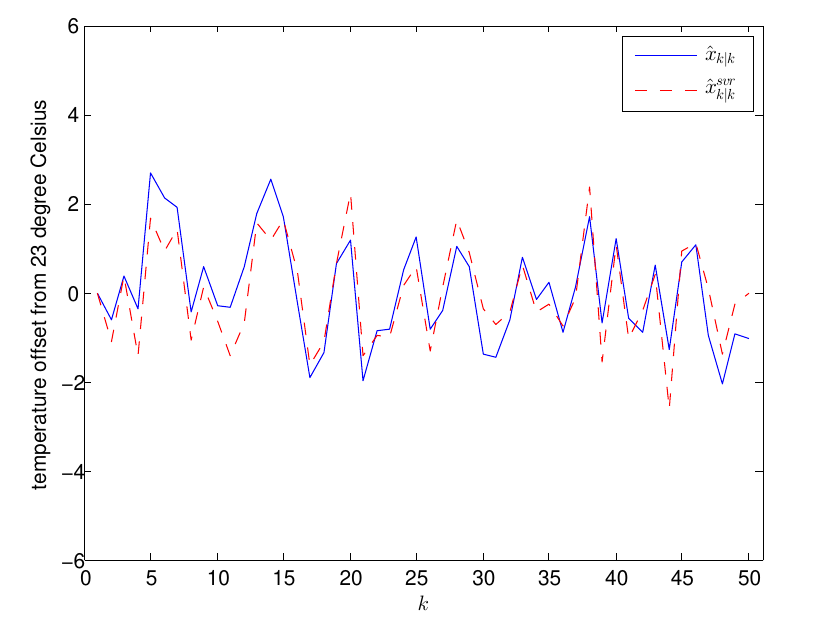}\vspace{-4mm}
  \caption{The plot of $\hat{x}_{k|k}$ and $\hat{x}^{svr}_{k|k}$. The presented time horizon is 49.} 
  \label{fig:LQG2-HVAC-curves}
  \vspace{-2mm}
\end{figure}

\end{example}

\section{Conclusion and Future Work}

In this paper, we propose the methodology of local design for privacy preservation for the user in a cooperative LQG control system.
We propose the localized privacy scheme and the corresponding privacy metric.
Then we analyze the performance of the privacy preservation under the privacy scheme.
Furthermore, we analyze the service loss in LQG control performance, and show that the LQG control is asymptotically stable under the proposed privacy scheme.
At last, we propose and solve optimization problems based on the trade-off between the privacy quality and service loss.

Possible future work may be considered from the following directions.

\textbf{System structures.} 
In this paper we considered a simple cooperative networked control system. In most current literature, the considered systems also have simple structures, such as the centralized ones.
General systems in practical world should have various and more complex structure, such as multiple participating parties and more complex topologies.

\textbf{The function to process the original data.} 
In this paper, we only considered adding a zero-mean white Gaussian noise to the original measurement. More advanced forms of noise and more general functions could be further investigated.

\textbf{An extra processor at user's receiving side.} 
In this paper we only have one processor (before sending data to the server), aiming at using the simplest equipment to preserve privacy. 
An extra processor at the receiving side can help process the control input back from the server, and can be expected to further improve the scheme performance. 
The cooperation of the two processors may work like an encoder and a decoder.

\textbf{Co-design with the data transmission.} 
We may call it strategic communication. 
For example, the user claims a sampling frequency which doubles the real one. 
At the odd time instance, the user sends the real value and take the resulting control input; at the even time instance, the user sends interrupting signals and throws the resulting control law (by the the decoding processor). 
By doing this, the public information is quite different from the true one. 


\textbf{The information to protect.}
Most of the literature, including our study, only considered certain signals as privacy. A larger range of concerns also could be privacy, such as the system parameters, the system structure, or even the goal of the user. The ideal relationship between the user and server may be just the server give the user what it claims wanted.

%

\textbf{A doubtful server.}
The possibility that the user has a local privacy scheme may motivate the curious server to employ a detection module to check the user's shared data. Then the user should consider the design of the privacy scheme under that situation.

\textbf{The trouble of the server.}
As a result of the user's local privacy scheme, the server cannot collect correct information, which will hurt its statistical consensus. Better designs benefiting both parties could be considered.

\appendix

\subsection{Proof of Theorem \ref{thm:LQG2-PrivacyPerformance}}
We consider the recursion of the estimates.
When $k=0$, we have
\begin{eqnarray*}
\hat{x}^{svr}_{0|0}=\hat{x}_{0|0}=\bar{x}_0,\\
P^{svr}_{0|0}=P_{0|0}=\Sigma_0.
\end{eqnarray*}
When $k=1$, the system dynamic is as follows:
\begin{eqnarray*}
x_1 \!\!\!&=&\!\!\! Ax_0+Bu_0+w_0, \\
u_0 \!\!\!&=&\!\!\! L_0\hat{x}^{svr}_{0|0},\\
y_1 \!\!\!&=&\!\!\! Cx_1+v_1.
\end{eqnarray*}
The true estimate evolves as follows:
\begin{eqnarray*}
\hat{x}_{1|0} \!\!\!&=&\!\!\!  A\hat{x}_{0|0}+Bu_{0},  \\
P_{1|0} \!\!\!&=&\!\!\!  AP_{0|0}A' + Q, \\
K_{1} \!\!\!&=&\!\!\!  P_{1|0}C'(CP_{1|0}C' + R)^{-1}, \\
\hat{x}_{1|1} \!\!\!&=&\!\!\!  \hat{x}_{1|0} + K_{1}(y_{1} - C\hat{x}_{1|0}),  \\
P_{1|1} \!\!\!&=&\!\!\!  (I -K_{1}C)P_{1|0}.
\end{eqnarray*}
At the server side, the public estimate evolves as follows: 
\begin{eqnarray*}
\hat{x}^{svr}_{1|0} \!\!\!&=&\!\!\!  A\hat{x}^{svr}_{0|0}+Bu_{0},  \\
P^{svr}_{1|0} \!\!\!&=&\!\!\!  AP^{svr}_{0|0}A' + Q, \\
K^{svr}_{1} \!\!\!&=&\!\!\!  P^{svr}_{1|0}C'(CP^{svr}_{1|0}C' + R)^{-1}, \\
\hat{x}^{svr}_{1|1} \!\!\!&=&\!\!\!  \hat{x}^{svr}_{1|0} + K^{svr}_{1}(z_{1} - C\hat{x}^{svr}_{1|0}),  \\
P^{svr}_{1|1} \!\!\!&=&\!\!\!  (I -K^{svr}_{1}C)P^{svr}_{1|0}.
\end{eqnarray*}
Meanwhile, we can find that
\begin{eqnarray*}
\hat{x}^{svr}_{1|1} \!\!\!&\neq&\!\!\!  \hat{x}_{1|1} ,  \\
K^{svr}_{1} \!\!\!&=&\!\!\!  K_{1}, \\
P^{svr}_{1|1} \!\!\!&=&\!\!\!  P_{1|1}.
\end{eqnarray*}

For general $k$, the system dynamic is
\begin{eqnarray*}
x_k \!\!\!&=&\!\!\! Ax_{k-1}+Bu_{k-1}+w_{k-1}, \\
u_{k-1} \!\!\!&=&\!\!\! L_{k-1}\hat{x}^{svr}_{k-1|k-1},\\
y_k \!\!\!&=&\!\!\! Cx_k+v_k.
\end{eqnarray*}
The true estimate evolves according to eqn. (\ref{eqn:state-time-update})-(\ref{eqn:error-covariance-measurement-update})
and the public estimate evolves according to eqn. (\ref{eqn:KF-pub-1})-(\ref{eqn:KF-pub-5}).
From the recursion, we can find that
\begin{eqnarray*}
\hat{x}^{svr}_{k|k} \!\!\!&\neq&\!\!\! \hat{x}_{k|k} ,  \\
K^{svr}_{k} \!\!\!&=&\!\!\! K_{k}, \\
P^{svr}_{k|k} \!\!\!&=&\!\!\! P_{k|k}.
\end{eqnarray*}
Then we compare $\hat{x}^{svr}_{k|k}$ and $\hat{x}_{k|k}$. We have
\begin{eqnarray*}
\hat{x}^{svr}_{k|k} \!\!\!&=&\!\!\! \hat{x}^{svr}_{k|k-1} + K^{svr}_{k}(z_{k} - C\hat{x}^{svr}_{k|k-1})\\
\!\!\!&=&\!\!\! (I-K^{svr}_k C) \hat{x}^{svr}_{k|k-1}+K^{svr}_{k}z_{k}, \\
\hat{x}_{k|k} \!\!\!&=&\!\!\! \hat{x}_{k|k-1} + K_{k}(y_{k} - C\hat{x}_{k|k-1})\\
\!\!\!&=&\!\!\! (I-K_k C) \hat{x}_{k|k-1}+K_{k}y_{k}.
\end{eqnarray*}
Since $K^{svr}_k=K_{k}$, 
\begin{eqnarray*}
\hat{x}^{svr}_{k|k}- \hat{x}_{k|k}\!\!\!&=&\!\!\! (I-K_k C)(\hat{x}^{svr}_{k|k-1}-\hat{x}_{k|k-1}) +K_{k}(z_{k}-y_{k}) \\
\!\!\!&=&\!\!\! (I-K_k C)(\hat{x}^{svr}_{k|k-1}-\hat{x}_{k|k-1})+K_{k}\delta_{k}.
\end{eqnarray*}
Since
\begin{eqnarray*}
\hat{x}^{svr}_{k|k-1} \!\!\!&=&\!\!\! A\hat{x}^{svr}_{k-1|k-1}+Bu_{k-1}, \\
\hat{x}_{k|k-1} \!\!\!&=&\!\!\! A\hat{x}_{k-1|k-1}+Bu_{k-1},
\end{eqnarray*}
then
\begin{eqnarray*}
\hat{x}^{svr}_{k|k-1}-\hat{x}_{k|k-1}  =  A(\hat{x}^{svr}_{k-1|k-1}-\hat{x}_{k-1|k-1}). 
\end{eqnarray*}
Hence we have
\begin{eqnarray*}
\hat{x}^{svr}_{k|k}- \hat{x}_{k|k} \!\!\!\!&=&\!\!\!\!  (I-K_k C)A(\hat{x}^{svr}_{k-1|k-1}-\hat{x}_{k-1|k-1})+K_{k}\delta_{k}
\end{eqnarray*}
and
\begin{eqnarray*}
\mathcal{Q}_{privacy}^k=(I-K_k C)A\mathcal{Q}_{privacy}^{k-1}A'(I-K_k C)' +K_k\Sigma_\delta K_k'.
\end{eqnarray*}
Since $\hat{x}^{svr}_{0|0}=\hat{x}_{0|0}$, $\mathcal{Q}_{privacy}^k$ is initialized by $\mathcal{Q}_{privacy}^0=0$.
$\hfill \blacksquare$

\subsection{Proof of Theorem \ref{thm:LQG2-obj}}

In this scenario, the states are unaccessible. When we calculate the LQG objective $\mathcal{O}_{0:T}$ from backward, the expectation should condition on the information set defined as follows:
\begin{eqnarray*}
\mathbb{I}_0&=&\{\phi\},\\
\mathbb{I}_k&=&\{y_1,y_2,...,y_k,u_0,u_1,...,u_{k-1}\}, ~~~k=1,...,T.
\end{eqnarray*}
We also define 
\begin{eqnarray*}
H_k&=&x_k'Wx_k+u_k'Uu_k,  ~~~k=0,1,...,T-1, \label{eqn:H_k}\\
H_T&=&x_T' W x_T. \label{eqn:H_T}
\end{eqnarray*}
Then we define
\begin{eqnarray*}
\mathcal{O}_{k:T}=\bm{E}\Big(\sum_{i=k}^{T-1}H_i~\Big| ~\mathbb{I}_k\Big),
\end{eqnarray*}
where the expectation is taken with respect to all possible uncertainties, i.e., 
\begin{itemize}
\item
$x_k$,

\item
$w_k, w_{k+1},...,w_{T-1}$ and $v_{k+1}, v_{k+2},...,v_{T}$, 

\item
$\delta_k, \delta_{k+1},...,\delta_{T-1}$.
\end{itemize}
Notice that $\mathcal{O}_{k:T}$ is a function of $\mathbb{I}_k$, i.e., $\mathcal{O}_{k:T}(\mathbb{I}_k)$.
Meanwhile, $\mathcal{O}_{0:T}=\bm{E}\Big(\sum\limits_{i=0}^{T-1}H_i\Big|\mathbb{I}_0\Big)=\bm{E}\Big(\sum\limits_{i=0}^{T-1}H_i\Big)$, which is also consistent with the definition (\ref{eqn:obj2}).

%
%

We have 
\begin{eqnarray*}
\mathcal{O}_{k:T}\!\!\!\!&=&\!\!\!\!\bm{E}(x_k'Wx_k+u_k'Uu_k+\mathcal{O}_{k+1:T}~|~\mathbb{I}_k)\\
\!\!\!\!&=&\!\!\!\! \mathop{\bm{E}}_{\delta_k}\!\Big[\!\mathop{\expt}_{x_k,w_k,v_{k+1}}\!\!(x_k'Wx_k\!+\!u_k'Uu_k\!+\!\mathcal{O}_{k+1:T}|\mathbb{I}_k,\delta_k) \Big|\mathbb{I}_k\Big],
\end{eqnarray*}
based on which we calculate the objective $\mathcal{O}_{0:T}$ in a backward manner.
First we have
\begin{eqnarray*}
\mathcal{O}_{T:T}=\bm{E}(x_T' W x_T |~ \mathbb{I}_T).
\end{eqnarray*}
Then
\begin{eqnarray*}
\mathcal{O}_{T-1:T}\!\!\!&=&\!\!\!\bm{E}(H_{T-1}+\mathcal{O}_{T:T} ~|~\mathbb{I}_{T-1})\\ 
\!\!\!&=&\!\!\! \mathop{\bm{E}}_{\delta_{T-1}}\Big[\mathop{\bm{E}}_{x_{T-1},w_{T-1},v_T}(x_{T-1}'Wx_{T-1}+u_{T-1}'Uu_{T-1}\\
&&\!\!\!+x_T' W x_T ~|~\mathbb{I}_{T-1},\delta_{T-1})~\big|~\mathbb{I}_{T-1}\Big].  
\end{eqnarray*}
To simplify the formulation, define
\begin{eqnarray*}
\mathcal{J}_{T-1:T}
\!\!\!&=&\!\!\! \mathop{\bm{E}}_{x_{T-1},w_{T-1},v_T}(x_{T-1}'Wx_{T-1}+u_{T-1}'Uu_{T-1}\\
&&\!\!\!+x_T' W x_T ~|~\mathbb{I}_{T-1},\delta_{T-1}). 
\end{eqnarray*}
Then
\begin{eqnarray*}
\mathcal{O}_{T-1:T}=
 \mathop{\bm{E}}_{\delta_{T-1}}(\mathcal{J}_{T-1:T}~|~\mathbb{I}_{T-1}).
\end{eqnarray*}
Since $\mathbb{I}_{T-1}$ and $\delta_{T-1}$ are conditioned on, the control input $u_{T-1}$, which is determined by $\mathbb{I}_{T-1}$ and $\delta_{T-1}$, is hence fixed. 
By straightforward calculation, we have
\begin{eqnarray*}
&&\!\!\!\mathcal{J}_{T-1:T}\\
\!\!\!&=&\!\!\! \bm{E}(x_{T-1}'Wx_{T-1}~|~ \mathbb{I}_{T-1},\delta_{T-1})+
u_{T-1}'Uu_{T-1}\\
 &&\!\!\!+\bm{E}\big[(Ax_{T-1}+Bu_{T-1}+w_{T-1})'W\\
&&~~~~\cdot(Ax_{T-1}+Bu_{T-1}+w_{T-1}) ~|~ \mathbb{I}_{T-1},\delta_{T-1}\big]\\  
\!\!\!&=&\!\!\! \bm{E}\big[x_{T-1}'(W+A'WA)x_{T-1} ~|~ \mathbb{I}_{T-1},\delta_{T-1}\big]\\
&&\!\!\!+u_{T-1}'(U+B'WB)u_{T-1}
 +2\hat{x}_{T-1|T-1}'A'WBu_{T-1}\\
&&\!\!\!+\mathrm{tr}(WQ)\\  
\!\!\!&=&\!\!\!\bm{E}\big[x_{T-1}'(W+A'WA)x_{T-1} ~|~ \mathbb{I}_{T-1},\delta_{T-1}\big]\\
&&\!\!\!+\big[u_{T-1}\!+\!(U\!+\!B'WB)^{-1}B'WA\hat{x}_{T-1|T-1}\big]'(U\!+\!B'WB)\\ 
&&\!\!\!~~\cdot\big[u_{T-1}\!+\!(U\!+\!B'WB)^{-1}B'WA\hat{x}_{T-1|T-1}\big]\\
&&\!\!\!-\hat{x}_{T\!-\!1|T\!-\!1}'A'WB(U\!+\!B'WB)^{-1}B'WA\hat{x}_{T\!-\!1|T\!-\!1}\!+\!\mathrm{tr}(WQ)\\ 
\!\!\!&=&\!\!\!\bm{E}\Big\{x_{T-1}'\big[W\!+\!A'WA\!-\!A'WB(U\!+\!B'WB)^{-1}B'WA\big]x_{T-1}\\
&&\!\!\! ~~~|~ \mathbb{I}_{T\!-\!1},\delta_{T\!-\!1}\Big\}
+\mathrm{tr}\big[A'WB(U\!+\!B'WB)^{-1}\!B'WAP_{T\!-\!1|T\!-\!1}\big]\\
&&\!\!\!+\mathrm{tr}(WQ)\\
&&\!\!\!+\big[u_{T-1}\!+\!(U\!+\!B'WB)^{-1}B'WA\hat{x}_{T-1|T-1}\big]'(U+B'WB)\\ 
&&\!\!\!~~\cdot\big[u_{T-1}\!+\!(U\!+\!B'WB)^{-1}B'WA\hat{x}_{T-1|T-1}\big]. 
\end{eqnarray*}
According to eqn. (\ref{eqn:LQG-S-terminal}-\ref{eqn:LQG-gain}) and  
\begin{eqnarray*}
u_{T-1}=L_{T-1}(\hat{x}_{T-1|T-1}+\hat{d}_{T-1}),
\end{eqnarray*}
we have
\begin{eqnarray*}
\mathcal{J}_{T-1:T}
\!\!\!&=&\!\!\!\bm{E}(x_{T-1}'S_{T-1}x_{T-1}~|~ \mathbb{I}_{T-1},\delta_{T-1}) +\mathrm{tr}(S_TQ)\\
&&\!\!\!+\mathrm{tr}\big[A'WB(U+B'WB)^{-1}B'WAP_{T-1|T-1}\big]\\
&&\!\!\! +\hat{d}_{T-1}'L_{T-1}'(U+B'S_TB)L_{T-1}\hat{d}_{T-1}.
\end{eqnarray*}
Back to $\mathcal{O}_{T-1:T}$, we have
\begin{eqnarray*}
 \mathcal{O}_{T-1:T}
\!\!\!&=&\!\!\! \mathop{\bm{E}}_{\delta_{T-1}}\big(\mathcal{J}_{T-1:T}~|~\mathbb{I}_{T-1}\big)\\ 
\!\!\!&=&\!\!\!\bm{E}(x_{T-1}'S_{T-1}x_{T-1}~|~ \mathbb{I}_{T-1}) +\mathrm{tr}(S_TQ)\\
&&\!\!\!+\mathrm{tr}\big[A'WB(U+B'WB)^{-1}B'WAP_{T-1|T-1}\big]\\
&&\!\!\!  + \bm{E}\big[\hat{d}_{T-1}'L_{T-1}'(U+B'S_TB)L_{T-1}\hat{d}_{T-1}~|~ \mathbb{I}_{T-1}\big] \\ 
\!\!\!&=&\!\!\!\bm{E}(x_{T-1}'S_{T-1}x_{T-1}~|~ \mathbb{I}_{T-1}) +\mathrm{tr}(S_TQ)\\
&&\!\!\!+\mathrm{tr}\big[A'WB(U+B'WB)^{-1}B'WAP_{T-1|T-1}\big]\\
&&\!\!\!  + \mathrm{tr}\Big[L_{T-1}'(U+B'S_TB)L_{T-1}\bm{E}(\hat{d}_{T-1}\hat{d}_{T-1}'~|~ \mathbb{I}_{T-1})\Big] \\
\!\!\!&=&\!\!\! \bm{E}(x_{T-1}'S_{T-1}x_{T-1}~|~ \mathbb{I}_{T-1}) +\mathrm{tr}(S_TQ)\\
&&\!\!\!+\mathrm{tr}\big[A'WB(U+B'WB)^{-1}B'WAP_{T-1|T-1}\big] \\
&&\!\!\! +\mathrm{tr}\Big[L_{T-1}'(U+B'S_TB)L_{T-1}\mathcal{Q}_{privacy}^{T-1}\Big].
\end{eqnarray*}
Let 
\begin{eqnarray*}
r_T&=&0,\\
r_{T-1}&=&r_T+\mathrm{tr}(S_TQ),\\
t_T&=&0,\\
t_{T-1}&=&t_T+\mathrm{tr}\big[A'WB(U+B'WB)^{-1}B'WAP_{T-1|T-1}\big],\\
\phi_T&=&0,\\
\phi_{T-1}&=&\phi_T+\mathrm{tr}\Big[L_{T-1}'(U+B'S_TB)L_{T-1} \mathcal{Q}_{privacy}^{T-1}\Big].
\end{eqnarray*}
Then we have
\begin{eqnarray*}
\mathcal{O}_{T-1:T}= \bm{E}(x_{T-1}'S_{T-1}x_{T-1}|~ \mathbb{I}_{T-1}) +r_{T-1}+t_{T-1}+\phi_{T-1}.
\end{eqnarray*}
Assume that $\mathcal{O}_{k+1:T}= \bm{E}(x_{k+1}'S_{k+1}x_{k+1}~|~ \mathbb{I}_{k+1})+r_{k+1}+t_{k+1}+\phi_{k+1}$.
Then
\begin{eqnarray*}
\mathcal{O}_{k:T}\!\!\!&=&\!\!\!\bm{E}(H_{k}+\mathcal{O}_{k+1:T} ~|~\mathbb{I}_{k})\\ 
\!\!\!&=&\!\!\! \mathop{\bm{E}}_{\delta_k}\Big[\mathop{\bm{E}}_{x_k, w_k, v_{k+1}}(x_{k}'Wx_{k}+u_{k}'Uu_{k}+x_{k+1}'S_{k+1}x_{k+1}\\
&&\!\!\!~~~~~~~~~~~~~~~~~~+r_{k+1}+t_{k+1} +\phi_{k+1} ~|~\mathbb{I}_k,\delta_k)~\big|~\mathbb{I}_k\Big].
\end{eqnarray*}
Define
\begin{eqnarray*}
\mathcal{J}_{k:T}
\!\!\!&=&\!\!\! \mathop{\bm{E}}_{x_k, w_k, v_{k+1}}(x_{k}'Wx_{k}+u_{k}'Uu_{k}+x_{k+1}'S_{k+1}x_{k+1}\\
&&~~~~~~~~~~~~+r_{k+1}+t_{k+1} +\phi_{k+1} ~|~\mathbb{I}_k,\delta_k). 
\end{eqnarray*}
Then
\begin{eqnarray*}
\mathcal{O}_{k:T}
= \mathop{\bm{E}}_{\delta_{k}}\big(\mathcal{J}_{k:T}~|~\mathbb{I}_k\big).
\end{eqnarray*}
Similarly as the preceding calculation of $\mathcal{J}_{T-1:T}$,
we have
\begin{eqnarray*}
&&\!\!\!\mathcal{J}_{k:T} \\
\!\!\!&=&\!\!\!\bm{E}\big[x_k'(W\!+\!A'S_{k+1}A\\
&&~~~- A'S_{k+1}B(U\!+\!B'S_{k+1}B)^{-1}B'S_{k+1}A)x_k~|~\mathbb{I}_k,\delta_k\big]\\
&&+\mathrm{tr}\big[A'WB(U+B'WB)^{-1}B'WAP_{k|k}\big]\\
&&+\big[u_k\!+\!(U\!+\!B'S_{k+1}B)^{-1}\!B'S_{k+1}A\hat{x}_{k|k}\big]'(U\!+\!B'S_{k+1}B)\\
&&~~~\cdot \big[u_k\!+\!(U\!+\!B'S_{k+1}B)^{-1}\!B'S_{k+1}A\hat{x}_{k|k}\big]\\
&&+\mathrm{tr}(S_{k+1}Q)+r_{k+1}+t_{k+1} +\phi_{k+1}. 
\end{eqnarray*}
Similarly, according to the definition of $S_k$ and $L_k$, and define
\begin{eqnarray*}
r_{k}\!\!\!&=&\!\!\!r_{k+1}+\mathrm{tr}(S_{k+1}Q),\\
t_{k}\!\!\!&=&\!\!\!t_T+\mathrm{tr}\big[A'S_{k+1}B(U+B'S_{k+1}B)^{-1}B'S_{k+1}AP_{k|k}\big],
\end{eqnarray*}
we have
\begin{eqnarray*}
\mathcal{J}_{k:T} \!\!\!&=&\!\!\!\bm{E}(x_k'S_{k}x_k~|~ \mathbb{I}_k,\delta_k)+r_{k}+t_k+\phi_{k+1}\\
&&\!\!\! +(u_k-L_kx_k)'(U+B'S_{k+1}B)(u_k-L_kx_k).
\end{eqnarray*}
Since
\begin{eqnarray*}
u_{k}=L_{k}(\hat{x}_{k|k}+\hat{d}_{k}),
\end{eqnarray*}
we have 
\begin{eqnarray*}
\mathcal{J}_{k:T}\!\!\!&=&\!\!\!\bm{E}(x_k'S_{k}x_k~|~ \mathbb{I}_k,\delta_k)+r_{k}+t_k+\phi_{k+1}\\
&&\!\!\!+\hat{d}_k'L_k'(U+B'S_{k+1}B)L_k\hat{d}_k.
\end{eqnarray*}
Back to $\mathcal{O}_{k:T}$, we have
\begin{eqnarray*}
\mathcal{O}_{k:T}
\!\!\!&=&\!\!\! \bm{E}(x_k'S_{k}x_k~|~ \mathbb{I}_k) +r_{k}+t_k+\phi_{k+1}\\
&&\!\!\!+\mathrm{tr}\Big[L_k'(U+B'S_{k+1}B)L_k\mathcal{Q}_{privacy}^k\Big]. \end{eqnarray*}
Let
\begin{eqnarray*}
\phi_{k}&=&\phi_{k+1}+\mathrm{tr}\Big[L_k'(U+B'S_{k+1}B)L_k \mathcal{Q}_{privacy}^k\Big],
\end{eqnarray*}
then
\begin{eqnarray*}
\mathcal{O}_{k:T}
&=& \bm{E}(x_k'S_{k}x_k~|~ \mathbb{I}_k) +r_{k}+t_k+\phi_{k}.
\end{eqnarray*}

Hence, ultimately we have 
\begin{eqnarray*}
\mathcal{O}_{0:T}
\!\!\!&=&\!\!\!\bm{E}(x_0'S_0x_0)+r_0+t_0+\phi_0\\ 
\!\!\!&=&\!\!\!\bm{E}(x_0'S_0x_0)+\sum_{k=0}^{T-1}\mathrm{tr}(S_{k+1}Q) + \sum_{k=0}^{T-1}\mathrm{tr}(\mathit\Phi_{k}P_k)\\
&&\!\!\!+\sum_{k=0}^{T-1}\mathrm{tr}\Big[L_k'(U+B'S_{k+1}B)L_k \mathcal{Q}_{privacy}^k\Big]\\
\!\!\!&=&\!\!\!\bm{E}(x_0'S_0x_0)+\sum_{k=0}^{T-1}\mathrm{tr}(S_{k+1}Q) + \sum_{k=0}^{T-1}\mathrm{tr}(\mathit\Phi_{k}P_k)\\
&&\!\!\!+\sum_{k=0}^{T-1}\mathrm{tr}\Big(\mathit\Phi_{k} \mathcal{Q}_{privacy}^k\Big).
\end{eqnarray*}
$\hfill \blacksquare$

\section*{Acknowledgement}

The work by Chao Yang and Wen Yang is supported by the National Natural Science Foundation of China under Grant 62336005, and the Natural Science Foundation of Shanghai under Grant 19ZR1413500.

The work by Y. Ni is supported by the National Natural Science Foundation of China under Grant 62303196, the Natural Science Foundation of Jiangsu Province of China under Grant BK20231036, and the Basic Research Funds of Wuxi Taihu Light Project under Grant K20221005.

\bibliographystyle{IEEETran}
\bibliography{BibFile-2016-12}

\end{document}